\documentclass[12pt,a4paper]{amsart}

\usepackage{amsmath,amsfonts,amssymb}

\usepackage[hmargin=2cm,vmargin=2cm]{geometry}

\usepackage[hyperfootnotes=false,colorlinks=true,linkcolor=blue,%
citecolor=purple,filecolor=magenta,urlcolor=cyan]{hyperref}
\usepackage{nameref,zref-xr}                    

\usepackage{amsthm,amscd}
\usepackage{epsfig}
\usepackage{color}
\usepackage[all]{xy}
\usepackage{tikz}
\usepackage{graphics, setspace}
\usepackage{breqn}
\usepackage{amstext}
\usepackage{array}


\newcommand{\CC}{{\mathbb{C}}}

\newcommand{\QQ}{{\mathbb{Q}}}

\def\A{{\mathcal A}}
\def\B{{\mathcal B}}

\def\I{{\mathcal I}}

\def\res{{\mathrm{res}}}

\newcommand{\p}{{\partial}}


\DeclareMathOperator{\Aut}{Aut}

\newcommand{\ZZ}{\mathbb{Z}}

\newcommand{\bt}{{\bf t}}



\newtheorem{theorem}{Theorem}[section]
\newtheorem{proposition}[theorem]{Proposition}
\newtheorem{lemma}[theorem]{Lemma}
\newtheorem{corollary}[theorem]{Corollary}

\newtheorem{remark}[theorem]{Remark}

\usepackage{color}

\def\&{\vspace{-5pt}&}

\allowdisplaybreaks


\numberwithin{equation}{section}

\begin{document}

\title{
Integrable hierarchies associated to 
infinite families of Frobenius manifolds
}

\author{Alexey Basalaev}
\address{A. Basalaev:\newline Faculty of Mathematics, National Research University Higher School of Economics, Usacheva str., 6, 119048 Moscow, Russian Federation, and \newline
Center for Advanced Studies, Skolkovo Institute of Science and Technology, Nobelya str., 1, 143026 Moscow, Russian Federation}
\email{a.basalaev@skoltech.ru}

\author{Petr Dunin-Barkowski}
\address{P. Dunin-Barkowski:\newline Faculty of Mathematics, National Research University Higher School of Economics, Usacheva str., 6, 119048 Moscow, Russian Federation, \newline
HSE--Skoltech International Laboratory of Representation Theory and Mathematical Physics, Skolkovo Institute of Science and Technology, Nobelya str., 1, 143026 Moscow, Russian Federation, and \newline Institute for Theoretical and Experimental Physics, Bolshaya Cheryomushkinskaya str., 25, 117218 Moscow, Russian Federation}
\email{ptdunin@hse.ru}

\author{Sergey Natanzon}
\address{S. Natanzon:\newline Faculty of Mathematics, National Research University Higher School of Economics, Usacheva str., 6, 119048 Moscow, Russian Federation, and \newline Institute for Theoretical and Experimental Physics, Bolshaya Cheryomushkinskaya str., 25, 117218 Moscow, Russian Federation}
\email{natanzons@mail.ru}

\date{\today}

 \begin{abstract}
 We propose a new construction of an integrable hierarchy associated to any infinite series of Frobenius manifolds satisfying a certain stabilization condition. We study these hierarchies for Frobenius manifolds associated to $A_N$, $D_N$ and $B_N$ singularities. In the case of $A_N$ Frobenius manifolds our hierarchy turns out to coincide with the KP hierarchy; for $B_N$ Frobenius manifolds it coincides with the BKP hierarchy; and for $D_N$ hierarchy it is a certain reduction of the 2-component BKP hierarchy. As a side product to these results we
 illustrate the
 enumerative meaning of certain coefficients of $A_N$, $D_N$ and $B_N$ Frobenius 
 potentials.
 \end{abstract}
 \maketitle

 \setcounter{tocdepth}{1}
   \tableofcontents

 \section{Introduction}
 The theory of Frobenius manifolds was introduced by B.Dubrovin in the early '90s as a general approach to Gromov-Witten theories and certain quantum field theories. 
 An $N$--dimensional Frobenius manifold 
 can be defined (cf. \cite{D2}) via
 its potential $F_N = F_N(t_1,\dots,t_N)$, s.t.
 \begin{equation*}
 \eta_{\alpha,\beta} := \frac{\p^3 F_N}{\p t_1 \p t_\alpha \p t_\beta}
 \end{equation*}
 is a constant non-degenerate matrix and $F_N$ is subject to the following system of equations called the \textit{WDVV equations}:
 \[
    \sum_{\mu,\nu = 1}^N \frac{\p^3 F_N}{\p t_\alpha \p t_\beta \p t_\mu} \eta^{\mu,\nu} \frac{\p^3 F_N}{\p t_\nu \p t_\gamma \p t_\sigma}
    =
    \sum_{\mu,\nu = 1}^N \frac{\p^3 F_N}{\p t_\alpha \p t_\gamma \p t_\mu} \eta^{\mu,\nu} \frac{\p^3 F_N}{\p t_\nu \p t_\beta \p t_\sigma},
 \]
 that should hold for 
 all
 $1\leq \alpha,\beta,\gamma,\sigma \leq N$. Here $ \eta^{\mu,\nu}$ is the inverse matrix for matrix $\eta_{\alpha,\beta}$. Note that this
 definition of Frobenius manifolds already assumes
 that $t_1$ is a special variable. Important examples of Frobenius manifolds come from singularity theory, after the work of K.Saito and M.Saito \cite{S1,S2}. In particular, Frobenius manifolds corresponding to ADE singularities 
 have polynomial potentials $F_N$.
 
 The 
 connection between Frobenius manifolds and integrable hierarchies has been observed by many authors in 
 various
 ways (cf. \cite{DVV,D2,FGM}). Due to the celebrated Witten conjectures, 
 particular interest was attributed to 
 Frobenius manifolds of ADE singularities \cite{FSZ,FJR}.
 
 It 
 was
 a general idea of B.Dubrovin that Frobenius manifolds could be used as a tool to study 
 integrable hierarchies. B.Dubrovin and Y.Zhang proposed in \cite{DZ} a way 
 to construct an integrable hierarchy associated to any Frobenius manifold. These integrable hierarchies are now called \textit{Dubrovin-Zhang hierarchies}. 
 Dubrovin-Zhang hierarchies of ADE Frobenius manifolds turned out to be equivalent to the corresponding Drinfeld-Sokolov hierarchies (cf. \cite{DLZ}). In particular, the Dubrovin-Zhang hierarchy of $A_N$-singularity Frobenius manifold appeared to be equivalent to the $(N+1)$-reduction of the KP hierarchy and the Dubrovin-Zhang hierarchy of $D_N$-singularity Frobenius manifold appeared to be equivalent to the $(2N-2,2)$-reduction of the 2-component BKP hierarchy (cf. \cite{LWZ}).

We propose a new way 
to construct an integrable hierarchy 
associated to an infinite series 
of Frobenius manifolds (instead of just a single one, as in Dubrovin-Zhang case)
satisfying certain stabilization conditions. The hierarchy we define has a rather simple form.

Fix a collection of  
numbers $R_{\alpha,\beta; \gamma_1,\dots,\gamma_m} \in \CC$ for $m,\,\alpha,\,\beta,\,\gamma_i \in \mathbb{Z}_{\geq 1}$ (for brevity we will sometimes write $R_\bullet$ for these coefficients in what follows), s.t. they are symmetric w.r.t. interchanging $\alpha$ and $\beta$, and under all permutations of $\gamma_i$'s.
Consider
an analytic function $f = f(\bt)$ depending on an infinite number of variables $\bt = (t_1,t_2,\dots)$ and denote  $\p_\alpha := \p/\p t_\alpha$ for any $\alpha \in \mathbb{Z}_{\geq 1}$. Consider the following system of PDEs: 
\begin{equation}\label{eq: int sys}
  \p_\alpha \p_\beta f = \sum_{m\ge 1}\sum_{\gamma_1,\dots,\gamma_m \ge 1} R_{\alpha,\beta; \gamma_1,\dots,\gamma_m } \p_1\p_{\gamma_1}f \cdot \dots \cdot \p_1\p_{\gamma_m}f.
\end{equation}
It 
expresses
arbitrary second order derivatives of $f$ via the special second order derivatives~$\p_1\p_\alpha f$ 
which are going to be used as the Cauchy data.

Lemma 3.2 of \cite{NZ} implies (after setting $\hbar$ to zero) that the dispersionless KP hierarchy can be written in this way for an 
appropriate choice of the coefficients $R_\bullet$ (see section \ref{ssec:KP} for details;  see also  \cite[Lemma~2]{DN}).
In particular, in this case the coefficients $R_\bullet$ satisfy the following condition:
\[
  R_{\alpha,\beta; \gamma_1,\dots,\gamma_m} = 0 \quad \text{unless} \quad \sum_{p=1}^m \gamma_p = \alpha + \beta,
\]
which results in the summation over $m$ and all $\gamma_i$ in Eq~\eqref{eq: int sys} 
being
finite for every 
given
 pair $\alpha,\beta$.

\subsection{Construction}\label{section: int sys defintion}
Let $\lbrace F_N \rbrace_{N \ge N_{min}}$ be an infinite series of $N$--dimensional Frobenius manifold potentials with 
$F_{N_{min}}$.
Assume $F_N \in \CC[[t_1,\dots,t_N]]$.
Specific cases of such infinite series are given by e.g. $F_N = F_{A_N}$ with $N_{min} = 1$, and $F_N = F_{D_N}$ with $N_{min}=4$; see below for more details on these specific examples.

The aim is to
find coefficients $R_\bullet$ s.t. 
\begin{enumerate}
 \item[(1)] system of PDEs \eqref{eq: int sys} is compatible,
 \item[(2)] $F_N$ is a solution to Eq.~\eqref{eq: int sys} for $\alpha + \beta \ll N$, 
\end{enumerate}
which is possible if $F_N$'s satisfy a certain stabilization condition, see below. Here $\alpha + \beta \ll N$ stands for ``sufficiently small $\alpha+\beta$ compared to $N$''. More precisely, we can reformulate condition (2) as follows: $\exists \kappa_1,\kappa_0\in\mathbb{Q}$, s.t. $\forall N\; F_N$ is a solution of \eqref{eq: int sys} for all $\alpha$ and $\beta$ satisfying $\alpha + \beta \leq \kappa_1 N +\kappa_0$. 

If coefficients $R_\bullet$ satisfying (1) and (2) exist, 
they can be found from the series expansion of $F_N$ as follows. By the definition of a Frobenius 
potential we have $\p_1\p_\alpha F_N = \sum_{\beta=1}^N \eta_{\alpha,\beta} t_\beta$, where $\eta$ is a flat metric of the Frobenius manifold. Assume that the coodinates $t_\bullet$ are such that the latter sum only consists of one summand for every given $\alpha$ (in most cases the flat coordinates can be chosen in such a way that $\eta$ is antidiagonal, and this condition holds, cf. \cite{D2}). 
Introducing the notation $t_{\overline \alpha} := \p_1\p_\alpha F_N$ and subtituting $f = F_N$ in Eq~\eqref{eq: int sys} we get:
\begin{equation}\label{eq: R via F}
  \p_\alpha \p_\beta F_N = \sum_{m\ge 1}\sum_{\gamma_1,\dots,\gamma_m} R_{\alpha,\beta; \gamma_1,\dots,\gamma_m} \ t_{\overline{\gamma_1}}\cdot \dots \cdot t_{\overline{\gamma_m}},
\end{equation}
for $\alpha+\beta \ll N$ as in condition (2) above.

Now the numbers $R_\bullet$ are read off as the coefficients of series expansions of $\p_\alpha \p_\beta F_N$. 
In particular, it is straightforward to see that 
\begin{equation}\label{eq: R1beta}
R_{1,\beta; \gamma_1,\dots,\gamma_m}  = R_{\beta,1; \gamma_1,\dots,\gamma_m}= \delta_{m,1}\delta_{\gamma_1,\beta}.
\end{equation}

In order for Eq.~\eqref{eq: int sys} to make sense we need $R_\bullet$ to be independent of $N$. Set 
\begin{align}\label{eq:RFder}
    R_{\alpha,\beta; \gamma_1,\dots,\gamma_m} &= 
    \begin{cases}
        \dfrac{1}{m!} \left.\dfrac{\p^{m+2} F_N}{\p t_\alpha\p t_\beta \p t_{\bar\gamma_1}\cdots \p t_{\bar\gamma_m}} \right|_{\bt = 0} &\text{ if it is independent of } N,
        \\
        0 & \text{ otherwise}.
    \end{cases}
\end{align}
The choice above amounts to a certain stabilization condition on $\p_\alpha\p_\beta F_N$ that should hold after the change of the variables $s_\alpha = t_{\bar\alpha}$ and also the certain choice of allowed indices $\alpha,\beta,\gamma_\bullet$. For such numbers $R_\bullet$ we show in Proposition~\ref{prop: compatibility from WDVV} that  the compatibility condition of system~\eqref{eq: int sys} follows from the WDVV equation on $F_N$, and thus we get a new dispersionless hierarchy.
\begin{remark}\label{rem:Rvanishing}
	Note that such a stabilization condition implies that for fixed $\alpha$ and $\beta$ the sum over $\gamma_1,\dots,\gamma_m$ in the RHS of \eqref{eq: int sys} becomes finite (for a fixed $m$), i.e. $R_{\alpha,\beta; \gamma_1,\dots,\gamma_m}$ are all equal to zero starting with sufficiently large $\gamma_i$'s. In particular, for $N$ sufficiently large for $\p_\alpha\p_\beta F_N$ to stabilize (for given $\alpha$ and $\beta$), if any of the $\gamma_i$ is larger than $N$, $R_{\alpha,\beta; \gamma_1,\dots,\gamma_m}$ necessarily vanishes.
\end{remark}

In the examples of $F_N = F_{A_N}$ and $F_N = F_{B_N}$ such a stabilization condition is just $\alpha+\beta \le N+1$. In the case of $F_N = F_{D_N}$ the stabilization condition is $\alpha+\beta \le N$ and at most one of $\gamma_1,\dots,\gamma_m$ is equal to $N$.

\subsection{Main results}
For the Frobenius manifolds of $A_N$ type the dispersionless hierarchy we construct coincides with the dispersionless KP hierarchy, rather than its reduction as it is in Dubrovin-Zhang case. It's important to note that this coincidence is proved in a easy and straightforward way without employing any complicated techniques. Namely, it turns out that our construction provides the Fay-identities form of the KP hierarchy. 

 For the Frobenius manifolds of $B_N$ type the dispersionless hierarchy we construct coincides with the dispersionless BKP hierarchy of \cite{DJKM}. 
 In $D_N$ case we get a reduction of dispersionless 2-component BKP hierarchy in one of the components only. Note again that Dubrovin-Zhang hierarchy coincides with 2-component BKP hierarchy reduced in both components.
 
  Unfortunately our approach is not applicable directly to the Frobenius manifolds of $E_6$, $E_7$ and $E_8$ singularities, 
  since they are not parts of some obvious infinite series of Frobenius manifolds.
We hope that such a series (or multiple series) can be introduced, but it's a subject of future investigations. 

 On the way to prove the coincidence of the hierarchies explained above, we 
 obtained
 the following interesting result.
For every 
$m,\,\alpha,\,\beta,\,\gamma_1,\dots,\gamma_m \in \mathbb{Z}_{\geq 1}$
denote by $\widehat P_{ij}(\gamma_1,\dots,\gamma_m)$ the number of all partitions $i_1,\dots,i_m$ of $i$ and $j_1,\dots,j_m$ of $j$, s.t. $\forall k\; i_k+j_k = \gamma_k + 1$. 

We have for all $\alpha + \beta \le N+1$ and $\kappa + \sigma \le N$:
\begin{align*}
    \left.\frac{\p^{m+2} F_{A_N}}{\p t_\alpha \p t_\beta \p t_{N+1-\gamma_1}\cdots \p t_{N+1-\gamma_m}} \right|_{\bt=0}
    &= (-1)^{m-1} (m-1)! \cdot \widehat P_{\alpha\beta}(\gamma_1,\dots,\gamma_m),
    \\
    \left.\frac{\p^{m+2} F_{D_N}}{\p t_\kappa \p t_\sigma \p t_{N-\gamma_1}\cdots \p t_{N-\gamma_m}}  \right|_{\bt=0}
    &= (-1)^{m-1} (m-1)!\cdot \widehat P_{2\kappa-1,2\sigma-1}(2\gamma_1-1,\dots,2\gamma_m-1).
\end{align*}
These equalities provide an enumerative meaning of the respective coefficients of $F_{A_N}$ and $F_{D_N}$ potentials.

 \subsection{\texorpdfstring{$\hbar$}{h}--deformation}
 It was observed in \cite{NZ} that the full KP hierarchy can be obtained from the dispersionless KP hierarchy written in the form of system~\eqref{eq: int sys} by the substitution $\p_k \mapsto \p_k^{\hbar}$ for certain differential operators $\p_k^\hbar = \p_k + O(\hbar)$. 
 This phenomenon was investigated deeply in the works of Takasaki and Takebe \cite{TT1,TT2}.
 
 We hope to extend our dispersionless hierarchies to the full form in the same way. In order to do this we need to consider the $\hbar$--deformations of the Frobenius manifold potentials $F_N$, called the higher genera potentials, with the help of Virasoro constraints. We hope to do this in subsequent works.
\subsection{Organization of the paper} 
Section \ref{sec:gencompat} is devoted to proving a key result stating that the WDVV equation implies the compatibility of system of PDEs \eqref{eq: int sys} when the coefficients are coming from an infinite family of Frobenius potentials satisfying a stabilization condition.

In section \ref{sec:frobstruc} we recall the basic theory of Frobenius manifolds; then we recall the results due to Noumi-Yamada and Zuber on the form of Frobenius potentials associated to $A_N$, $D_N$ and $B_N$ singularities.

In section \ref{sec:stabpot} we prove that Frobenius potentials associated to $A_N$, $D_N$ and $B_N$ singularities satisfy stabilization conditions of section \ref{section: int sys defintion}.

In section \ref{section: enumuerative meaning} we touch upon the enumerative meaning of the coefficients of these potentials. 

Section \ref{sec:ABhier} is devoted to identifying the hierarchies resulting from the construction of section \ref{section: int sys defintion} applied for $A_N$ and $B_N$ Frobenius potentials with known integrable hierarchies.

Section \ref{sec:Dhier} covers the same subject as section \ref{sec:ABhier}, just for the $D_N$ case, as it turns out that it is quite different to the $A_N$ and $B_N$ ones.
\subsection*{Acknowledgements}
The authors are grateful to A.Buryak for useful discussions.
The work of A.B. was supported by RSF grant no. 19-71-00086.

\section{WDVV and compatibility of a system of PDEs}\label{sec:gencompat}
Assume that $M$ is an open full--dimensional subspace of $\CC^N$. We say that it's endowed with a structure of Frobenius manifold if there is a function $F_N = F_N(t_1,\dots,t_N)$, s.t. the following conditions hold (cf. \cite{D2}).

\begin{itemize}
	\item 
The variable $t_1$ is special in the following sense:
\[
    \frac{\p F_N}{\p t_1} = \frac{1}{2} \sum_{\alpha,\beta = 1}^N \eta_{\alpha,\beta} t_\alpha t_\beta,
\]
where $\eta_{\alpha,\beta}$ are components of a non-degenerate bilinear form $\eta$ (which does not depend on $t$'s). In what follows denote by $\eta^{\alpha,\beta}$ the components of $\eta^{-1}$.

\item The function $F_N$ 
satisfies
a large system of PDEs called the WDVV equations:
\[
\sum_{\mu,\nu = 1}^N \frac{\p^3 F_N}{\p t_\alpha \p t_\beta \p t_\mu} \eta^{\mu,\nu} \frac{\p^3 F_N}{\p t_\nu \p t_\gamma \p t_\sigma}
=
\sum_{\mu,\nu = 1}^N \frac{\p^3 F_N}{\p t_\alpha \p t_\gamma \p t_\mu} \eta^{\mu,\nu} \frac{\p^3 F_N}{\p t_\nu \p t_\beta \p t_\sigma},
\]
which should hold for every given $1\leq\alpha,\beta,\gamma,\sigma\leq N$.

\item There is a vector field $E$ called the \textit{Euler vector field}, s.t. modulo quadratic terms in $t_\bullet$ we have $E \cdot F_N = (3-\delta)F_N$ for some fixed complex number $\delta$. We will assume $E$ to have the following simple form
\[
    E = \sum_{k=1}^N d_k t_k \frac{\p}{\p t_k}
\]
for some fixed numbers $d_1,\dots,d_N$. Moreover we set $d_1 = 1$.
\end{itemize}

Given such a data $(M,F_N,E)$ one can endow every tangent space $T_pM$ with a structure of commutative associative product $\circ$ (depending on $\bt$) defined as follows:
\[
    \frac{\p}{\p t_\alpha} \circ \frac{\p}{\p t_\beta} = \sum_{\delta,\gamma=1}^N\frac{\p^3 F_N}{\p t_\alpha \p t_\beta \p t_\delta} \eta^{\delta\gamma} \frac{\p}{\p t_\delta}.
\]
It follows that $\eta(a \circ b,c) = \eta(a,b \circ c)$ for any vector fields $a,b,c$.

The following proposition is very important for what follows.
\begin{proposition}\label{prop: compatibility from WDVV}
    Let the coefficients $R_\bullet$ be constructed as in Section~\ref{section: int sys defintion} from a series of Frobenius manifold potentials $\lbrace F_N \rbrace_{N \ge N_{min}}$ which satisfy a stabilization condition. Then the system of PDEs \eqref{eq: int sys} with such coefficients $R_\bullet$ is compatible.
\end{proposition}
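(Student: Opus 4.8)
The plan is to verify compatibility of \eqref{eq: int sys} directly: treating the quantities $u_\alpha := \p_1\p_\alpha f$ as the Cauchy data, I will check that the two ways of computing a third-order mixed derivative agree, and show this agreement is exactly (a consequence of) the WDVV equations for $F_N$ together with the stabilization property. Concretely, differentiate \eqref{eq: int sys} once more and demand that $\p_\sigma(\p_\alpha\p_\beta f)$ be symmetric under interchanging the ``outer'' index $\sigma$ with $\alpha$ or $\beta$. Using \eqref{eq: int sys} to re-expand $\p_\sigma\p_1\p_{\gamma_k}f = \p_1(\p_\sigma\p_{\gamma_k}f)$ and also $\p_\sigma\p_\alpha\p_\beta f$, one gets a polynomial identity in the $u_\bullet$ whose coefficients are built out of the $R_\bullet$. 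Because of \eqref{eq: R1beta}, differentiating by $\p_1$ is transparent (it just lowers a barred index), so the real content is the symmetry between $\sigma$ and $\beta$ for $\sigma,\beta\ge 2$. I would package the required identity as a quadratic relation among the $R_\bullet$'s, of the schematic shape
\[
\sum_{\text{split}} R_{\alpha,\sigma;\,\cdots}\,R_{\bullet,\beta;\,\cdots}
=
\sum_{\text{split}} R_{\alpha,\beta;\,\cdots}\,R_{\bullet,\sigma;\,\cdots},
\]
where on each side one of the produced $\p_1\p_{\gamma}f$ factors gets hit by the remaining derivative and re-expanded via \eqref{eq: int sys}.

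Next I would recognize this quadratic relation as the Taylor coefficients (at $\bt=0$, after the substitution $s_\alpha = t_{\bar\alpha}$) of the associativity/WDVV equation for $F_N$. Indeed, by \eqref{eq:RFder} the $R_\bullet$ are, up to factorials, the derivatives of $F_N$ in the $s$-variables; the WDVV equation
\[
\sum_{\mu,\nu}\p_\alpha\p_\beta\p_\mu F_N\,\eta^{\mu\nu}\,\p_\nu\p_\gamma\p_\sigma F_N
=
\sum_{\mu,\nu}\p_\alpha\p_\gamma\p_\mu F_N\,\eta^{\mu\nu}\,\p_\nu\p_\beta\p_\sigma F_N
\]
becomes, after one more $\p_1$ and after rewriting $\p_1\p_\mu F_N = t_{\bar\mu}$ so that the metric contraction collapses to a substitution of barred indices, precisely the needed quadratic identity among the $R_\bullet$ when expanded in the $s$-variables and read off order by order. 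Equivalently, one can phrase it invariantly: the product $\circ$ being commutative and associative, and the vector field $\p_1$ being the flat unit, is exactly the statement that the formal map $f\mapsto$ (the RHS of \eqref{eq: int sys}) defines a flat torsion-free structure, whence the cross-derivatives match.

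The step I expect to be the genuine obstacle is the bookkeeping of \textbf{which} Taylor coefficients one is allowed to use, i.e. controlling the index ranges. The compatibility identity involves $R_\bullet$'s with various numbers $m$ of $\gamma$-indices and hence, through \eqref{eq:RFder}, derivatives of $F_N$ of unbounded order; meanwhile $F_N$ itself only solves \eqref{eq: int sys} in the stable range $\alpha+\beta\ll N$, and the WDVV equation one invokes lives on a finite-dimensional $F_N$. So I must argue that for any fixed pair $(\alpha,\beta)$ and any fixed total degree, choosing $N$ large enough makes all the $F_N$-derivatives that appear fall inside the stabilized range, makes all the sums over $m$ and $\gamma_i$ finite (this is Remark~\ref{rem:Rvanishing}: $R_{\alpha,\beta;\gamma_\bullet}=0$ once some $\gamma_i>N$, and more sharply once the $F_N$-derivative has not stabilized), and makes the contraction with $\eta$ legitimately reduce to the substitution $s_\alpha=t_{\bar\alpha}$ under the antidiagonal assumption. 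Once the relevant quantities are $N$-independent in that range, the finite-$N$ WDVV identity yields the infinite identity among the $R_\bullet$, and the compatibility follows. I would therefore structure the proof as: (i) reduce compatibility to a quadratic relation $\mathcal{Q}(R_\bullet)=0$ among the coefficients; (ii) show $\mathcal{Q}(R_\bullet)=0$ is, in the stable range and for large $N$, the expansion of WDVV for $F_N$ in the $s$-coordinates; (iii) invoke the stabilization hypothesis and Remark~\ref{rem:Rvanishing} to pass to the limit and conclude.
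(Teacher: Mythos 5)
Your plan is essentially the paper's own proof: you reduce compatibility to the symmetry, in the two non-unit indices, of a quadratic expression in the $R_\bullet$'s obtained by applying \eqref{eq: int sys} twice, then identify that expression with the Taylor coefficients at $\bt=0$ (derivatives in the $t_{\bar\gamma}$ directions, with the antidiagonal $\eta$ collapsing the contraction) of the WDVV expression for $F_N$ with $N$ taken large enough, using the stabilization condition and Remark~\ref{rem:Rvanishing} to keep all sums finite and $N$-independent. This matches the paper's argument step for step, so there is nothing further to add.
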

\begin{proof}
  We need to show that equalities $\p_\gamma(\p_\alpha\p_\beta f) = \p_\beta (\p_\alpha \p_\gamma f)$ hold true if one 
  substitutes
  the expressions inside the brackets 
  with the RHS of \eqref{eq: int sys}.
  
  If one of the indices is equal to $1$ this follows from Eq.~\eqref{eq: R1beta}. 
  
  For any 
  given
  $\alpha,\beta,\gamma \ge 2$ we have
  \begin{align} \label{eq:tripder}
    \p_\gamma (\p_\alpha\p_\beta f) &= \sum_{k,l\ge 1}\sum_{i_1,\dots,i_k} R_{\alpha,\beta; i_1,\dots,i_k} \sum_{p=1}^k \sum_{q=1}^l \sum_{j_1,\dots,j_l} R_{i_p,\gamma;j_1,\dots,j_l} 
    \\ \nonumber
    & 
    \quad \times \prod_{s\neq p}\p_1\p_{i_s}f \; \cdot \; \prod_{r\neq q} \p_1\p_{j_r}f \; \cdot \; \p_1\p_1\p_{j_q}f.
  \end{align}
  Here we have applied \eqref{eq: int sys} twice.
  All coefficients $R_\bullet$ here
  are either zero or can be recovered as coefficients in front of respective monomials in $F_N$ for sufficiently large $N$. 
 For a given $K\in \mathbb{Z}_{\geq 0}$, $\I 
 =(\iota_1,\dots\iota_K) 
 \in \left(\mathbb{Z}_{\geq 1}\right)^K$ and $\gamma,\alpha, \beta,\kappa\in \mathbb{Z}_{\geq 1}$ denote
  \begin{align}\label{eq:Omegatildedef}
\widetilde{\Omega}_{\gamma; \alpha,\beta, \I,\kappa} &:= \sum_{h=0}^K 
\sum_{p=1}^{h+1}\sum_{q=1}^{K-h+1}\sum_{\nu =1}^\infty 
R_{\alpha,\beta;
 \iota_1,\dots,\iota_{p-1},\nu,\iota_p,\dots,\iota_h}
R_{\nu,\gamma;
\iota_{h+1},\dots,\iota_{h+q-1},\kappa,\iota_{h+q},\dots,\iota_K,\kappa}\\
&\phantom{:}= \sum_{h=0}^K
\sum_{\nu =1}^\infty (h+1)(K-h+1)
R_{\alpha,\beta;
	\iota_1,\dots,\iota_h,\nu}
R_{\sigma,\gamma;
	\iota_{h+1},\dots,\iota_K,\kappa}.
\end{align}
where we have used the symmetry of $R_\bullet$ in the second equality.
  
  We can rewrite \eqref{eq:tripder} as
  \begin{equation}\label{eq:tripder2}
  \p_\gamma (\p_\alpha\p_\beta f) = \sum_{K\geq 0}\; \sum_{\I \in \left(\mathbb{Z}_{\geq 1}\right)^K}\; \sum_{\kappa \geq 1} \widetilde{\Omega}_{\gamma; \alpha,\beta, \I,\kappa} \prod_{a \in \I} \p_1\p_{a}f \cdot \p_1\p_1\p_{\kappa}f.
  \end{equation}
  
  Now for a given $K\in\mathbb{Z}_{\geq 0}$,  $\mathcal{J}=(j_1,\dots,j_K)\in\left(\mathbb{Z}_{\geq 1}\right)^K$, $j_1\leq\ldots\leq j_K$ and $\gamma,\alpha, \beta,\kappa\in \mathbb{Z}_{\geq 1}$ denote
   \begin{align}\label{eq:Omegadef}
\Omega_{\gamma; \alpha,\beta, \mathcal{J},\kappa} &:= 
\sum_{\sigma\in S_K}\widetilde{\Omega}_{\gamma; \alpha,\beta, \sigma(\mathcal{J}),\kappa}\\ \nonumber
&\phantom{:}= 
\sum_{\sigma\in S_K}\sum_{h=0}^K
  \sum_{\nu =1}^\infty (h+1)(K-h+1)
  R_{\alpha,\beta;
  	j_{\sigma(1)},\dots,j_{\sigma(h)},\nu}
  R_{\nu,\gamma;
  	j_{\sigma(h+1)},\dots,j_{\sigma(K)},\kappa}\\ \nonumber
  &\phantom{:}= \sum_{I_1\sqcup I_2 = \{1,\dots,K\}}
\sum_{\nu =1}^\infty (|I_1|+1)!(|I_2+1)!
R_{\alpha,\beta;
\mathcal{J}_{I_1},\nu}
R_{\nu,\gamma;
\mathcal{J}_{I_2},\kappa}.
  \end{align}
  
  With the help of this definition of $\Omega_{\gamma; \alpha,\beta, \mathcal{J},\kappa}$, we rewrite \eqref{eq:tripder2} as
  \begin{equation}\label{eq:tripder3}
  \p_\gamma (\p_\alpha\p_\beta f) = \sum_{K\geq 0}\; \sum_{\substack{\mathcal{J}=(j_1,\dots,j_K)\\1\leq j_1\leq\ldots\leq j_K}}\; \sum_{\kappa \geq 1} \dfrac{\Omega_{\gamma; \alpha,\beta, \mathcal{J},\kappa}}{|\Aut(\mathcal{J})|} \prod_{a \in \mathcal{J}} \p_1\p_{a}f \cdot \p_1\p_1\p_{\kappa}f.
  \end{equation}

  It remains to show that each $\Omega_{\gamma; \alpha,\beta, \I,\kappa}$ is symmetric in $\beta$ and $\gamma$.
  
  Let $N_1$ be s.t. for our fixed $\alpha$ and $\beta$ the stabilization condition of Section~\ref{section: int sys defintion} holds for all $R_{\alpha,\beta; \bullet}$ appearing in \eqref{eq:Omegadef}. Due to remark \ref{rem:Rvanishing}, the sum over $\nu$ in \eqref{eq:Omegadef} is actually finite, and \eqref{eq:Omegadef} can be rewritten as
  \begin{equation}\label{eq:Omegafin1}
  \Omega_{\gamma; \alpha,\beta, \I,\kappa} =\sum_{I_1\sqcup I_2 = \{1,\dots,K\}}
  \sum_{\nu =1}^{N_1} (|I_1|+1)!(|I_2+1)!
  R_{\alpha,\beta;
  	\mathcal{J}_{I_1},\nu}
  R_{\nu,\gamma;
  	\mathcal{J}_{I_2},\kappa}.
  \end{equation}
  Now let $N_2 > N_1$ be s.t. the stabilization condition of Section~\ref{section: int sys defintion} holds for all $R_{\nu,\gamma; \bullet}$ appearing in \eqref{eq:Omegafin1} for our fixed $\gamma$ and all $1\leq\nu \leq N_1$. Since the terms with $\nu>N_1$ vanish in any case, we can write \eqref{eq:Omegafin1} as follows:
  \begin{equation}\label{eq:Omegafin}
  \Omega_{\gamma; \alpha,\beta, \I,\kappa} =\sum_{I_1\sqcup I_2 = \{1,\dots,K\}}
  \sum_{\nu =1}^{N_2} (|I_1|+1)!(|I_2+1)!
  R_{\alpha,\beta;
  	\mathcal{J}_{I_1},\nu}
  R_{\nu,\gamma;
  	\mathcal{J}_{I_2},\kappa},
  \end{equation}
  where all $R_\bullet$'s which are parts of non-vanishing terms satisfy the stabilization condition.

  The following expression is 
  symmetric in $\beta$ and $\gamma$ 
  due to the WDVV equation for $F_{N_2}$:
  \begin{align*}
    & \left.\frac{\p^{|\mathcal{J}|}}{\prod_{a \in \mathcal{J}} \p t_{\bar a}} \left( \sum_{\mu,\nu = 1}^{N_2} \sum_{\sigma  =1}^{N_2} \frac{\p^3 F_{N_2}}{\p t_\alpha \p t_\beta \p t_\mu} \eta^{\mu,\nu} \frac{\p^3 F_{N_2}}{\p t_\nu \p t_\gamma \p t_\sigma} \ \eta^{\sigma,\kappa} \right)\right|_{\bt=0}
    \\
    & = \left.\sum_{I_1\sqcup I_2 = \{1,\dots,K\}}\, \sum_{\nu = 1}^{N_2}  \left( \frac{\p^{|I_1|}}{\prod_{a \in \mathcal{J}_{I_1}} \p t_{\bar a}} \, \frac{\p^3 F_{N_2}}{\p t_\alpha \p t_\beta \p t_{\bar\nu}} \right)\right|_{\bt=0} \left.\left( \frac{\p^{|I_2|}}{\prod_{a \in \mathcal{J}_{I_2}} \p t_{\bar a}} \, \frac{\p^3 F_{N_2}}{\p t_\nu \p t_\gamma \p t_{\bar\kappa}} \right)\right|_{\bt=0}.
  \end{align*}
  The latter form of this expression due to Eq.~\eqref{eq: R via F}, cf. \eqref{eq:RFder}, explicitly coincides with the RHS of \eqref{eq:Omegafin} and thus with $\Omega_{\gamma; \alpha,\beta, \mathcal{J},\kappa}$ which implies that $\Omega_{\gamma; \alpha,\beta, \mathcal{J},\kappa}$ is symmetric in $\beta,\,\gamma$.
\end{proof}

In the next section we show how one can obtain the data above starting from the associative commutative product $\circ$ and a pairing $\eta$ in the case of A and D singularities.

\section{Frobenius structures of \texorpdfstring{$A_N$}{AN}, \texorpdfstring{$D_N$}{DN} and \texorpdfstring{$B_N$}{BN} singularities} \label{sec:frobstruc}

The $A_N$ and $D_N$ type singularities are defined 
via the following polynomials:
\[
    f_{A_N} = \frac{x^{N+1}}{N+1} +y^2 , \quad f_{D_N} = \frac{x^{N-1}}{N-1} + xy^2.
\]
One associates to them the so-called unfoldings $\Lambda_W: \CC^2\times\CC^N \to \CC$
\[
    \Lambda_{A_N} = \frac{x^{N+1}}{N+1} +y^2 + \sum_{k=1}^Nv_k x^{k-1}, \quad \Lambda_{D_N} = \frac{x^{N-1}}{N-1} + xy^2 + \sum_{k=1}^{N-1}v_k x^{k-1} + v_N y,
\]
that depend on additional parameters $v = (v_1,\dots,v_N)\in M_W := \CC^N$. 

Let's introduce the Frobenius manifold structure on $M_W$. To do this for every fixed $v \in M_W$ consider the following quotient-ring:
\[
    \A_v := \CC[x,y]\, \Big/ \left( \frac{\p \Lambda_W}{\p x}, \frac{\p \Lambda_W}{\p y} \right).
\]
It's endowed with the quotient-ring product structure, and the classical singularity theory arguments assure that $\A_v$ is an $N$--dimensional $\CC$--vector space. Let $c_{ab}^s(v)$ stand for the structure constants of this product in the basis $[\p \Lambda_W/\p v_1],\dots,[\p \Lambda_W/\p v_N]$, namely,
\begin{align*}
    & A_N: \quad \frac{\p \Lambda_W}{\p v_k} = x^{k-1}, \ 1 \le k \le N,
    \\
    & D_N: \quad \frac{\p \Lambda_W}{\p v_k} = x^{k-1}, \ 1 \le k \le N-1, \quad \frac{\p \Lambda_W}{\p v_N} = y.
\end{align*}

The product $\circ: T_vM \otimes T_vM \to T_vM \otimes \CC[v_1,\dots,v_N]$ is now defined by
\[
    \frac{\p}{\p v_a} \circ \frac{\p}{\p v_b} := \sum_{k=1}^N c_{ab}^k(v) \frac{\p}{\p v_k}.
\]
Obviously, $\p/\p v_1$ is the unit of this product.
In particular, we have for $A_N$
\begin{align}
    \frac{\p}{\p v_a} \circ \frac{\p}{\p v_b} = \frac{\p}{\p v_{a+b-1}} \quad \forall a+b \le N+1,
    \label{eq: A_N simple str constants in v}
    \\
    \frac{\p}{\p v_a} \circ \frac{\p}{\p v_{N+2-a}} = - \sum_{k=2}^N (k-1)v_k \frac{\p}{\p v_{k-1}}.
\end{align}

Introduce the non-degenerate $\CC[v]$--bilinear pairing $\eta: T_vM \otimes T_vM \to \CC[v_1,\dots,v_N]$ by
\begin{align*}
    &\eta(\frac{\p}{\p v_a},\frac{\p}{\p v_b}) := c_{a,b}^N, \quad W = A_N,
    \\
    &\eta(\frac{\p}{\p v_a},\frac{\p}{\p v_b}) := c_{a,b}^{N-1}, \quad W = D_N.
\end{align*}
The pairing we introduce is in fact the well-known residue pairing. 

\begin{theorem}[cf. \cite{D1,S1,ST}]
    The data $(M,\circ,\eta)$ is a Frobenius manifold. In particular, there is a choice of the coordinates $t_\alpha = t_\alpha(v)$ (which are called the flat coordinates), s.t. in the basis $\p/\p t_1, \dots, \p/\p t_N$ we have
    \begin{itemize}
     \item the pairing $\eta$ is constant,
     \item there is a Frobenius manifold potential $F_W = F_W(t_1,\dots,t_N)$, s.t.
     \[
        \frac{\p}{\p t_\alpha} \circ \frac{\p}{\p t_\beta} = \sum_{\gamma,\delta=1}^N \frac{\p^3 F_W}{\p t_\alpha \p t_\beta \p t_\gamma} \eta^{\gamma \delta} \frac{\p}{\p t_\delta}.
     \]

    \end{itemize}
\end{theorem}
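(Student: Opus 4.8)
The theorem in question asserts that the data $(M,\circ,\eta)$ built from the unfolding of an $A_N$ or $D_N$ singularity is a Frobenius manifold, in particular that there exist flat coordinates $t_\alpha$ in which $\eta$ becomes constant and the product is governed by a potential $F_W$. I will treat the product and pairing as given, and the classical theory of primitive forms (K.~Saito) and M.~Saito's construction as the backbone; the goal is really to verify the axioms rather than to reprove the whole theory from scratch.

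\medskip
\noindent\textbf{The plan.} First I would establish potentiality of the product, i.e.\ that $c_{ab}^s(v)$ arises from a third-derivative tensor. The key point is that $\A_v$ is a Jacobian (Milnor) ring, so the structure constants in the basis $[\partial\Lambda_W/\partial v_k]$ are exactly the coefficients of the expansion of $\partial\Lambda_W/\partial v_a\cdot\partial\Lambda_W/\partial v_b$ modulo the Jacobian ideal; differentiating the unfolding shows $c_{ab}^s=\partial^3(\text{something})$, so the structure constants already have the symmetry $\partial_c c_{ab}^s=\partial_a c_{cb}^s$ (total symmetry of the associated $3$-tensor in flat coordinates). Concretely, for $A_N$ one can use the explicit formulas \eqref{eq: A_N simple str constants in v}: the product on the ``principal part'' is the truncated polynomial multiplication $\partial_{v_a}\circ\partial_{v_b}=\partial_{v_{a+b-1}}$, and the correction terms are generated by the relation $x^{N}=-\sum_{k\ge 2}(k-1)v_k x^{k-2}$ coming from $\partial\Lambda_{A_N}/\partial x=0$; this makes the tensor $\eta(\partial_{v_a}\circ\partial_{v_b},\partial_{v_c})$ manifestly totally symmetric. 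The $D_N$ case is analogous once one separates the $x$- and $y$-directions and uses $\partial\Lambda_{D_N}/\partial y=2xy+v_N=0$ together with $\partial\Lambda_{D_N}/\partial x=0$.

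\medskip
\noindent\textbf{Flat coordinates and the metric.} Next I would produce the flat coordinates. The pairing $\eta$ is the residue pairing $\eta(a,b)=\operatorname{res}(ab\,\tfrac{dx\wedge dy}{\partial_x\Lambda\,\partial_y\Lambda})$, which is flat (indeed of zero curvature) because its Christoffel symbols vanish after a suitable linear-in-$v$ change; equivalently, it is K.~Saito's flat metric attached to a primitive form. I would invoke this: there exist coordinates $t_\alpha=t_\alpha(v)$, polynomial in the $v_k$, such that $\eta_{\alpha\beta}=\eta(\partial_{t_\alpha},\partial_{t_\beta})$ is a constant (antidiagonal) matrix and the unit $\partial_{v_1}=\partial_{t_1}$. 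For $A_N$ these are the classical flat coordinates (e.g.\ Noumi--Yamada / Zuber normalisation, recalled later in Section~\ref{sec:frobstruc}); for $D_N$ they are obtained similarly. Once flatness is in hand, the totally symmetric $3$-tensor $c_{\alpha\beta\gamma}:=\eta(\partial_{t_\alpha}\circ\partial_{t_\beta},\partial_{t_\gamma})$, being closed (its symmetry under all index permutations including the derivative index is exactly the integrability condition $\partial_{[\delta}c_{\alpha]\beta\gamma}=0$), is locally $\partial_\alpha\partial_\beta\partial_\gamma F_W$ for some function $F_W$; homogeneity under the Euler field $E=\sum d_k t_k\partial_{t_k}$ forces $F_W$ to be polynomial. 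Associativity of $\circ$ then translates verbatim into the WDVV equations for $F_W$, and the displayed identity $\partial_{t_\alpha}\circ\partial_{t_\beta}=\sum\partial^3 F_W\,\eta^{\gamma\delta}\partial_{t_\delta}$ is just the definition of $c^\gamma_{\alpha\beta}$ via $\eta$.

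\medskip
\noindent\textbf{Main obstacle.} The genuine work is the existence of flat coordinates: showing that the residue metric $\eta$ has vanishing curvature and can be simultaneously flattened while keeping $\partial_{v_1}$ as the unit. This is precisely the content of K.~Saito's theory of primitive forms together with M.~Saito's proof of their existence for isolated hypersurface singularities, and for the quasi-homogeneous $A_N,D_N$ cases it can also be checked by a direct (if tedious) computation of the Gram matrix of the residue pairing in the monomial basis and an explicit triangular change of variables. I would therefore cite \cite{D1,S1,ST} for the general statement and, for $A_N$ (and $D_N$), point forward to the explicit flat coordinates recalled in the next section, after which potentiality, polynomiality, and WDVV follow formally from the symmetry and associativity established above. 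The remaining verifications -- constancy of $\eta$ in the $t$-coordinates, $d_1=1$, and the Euler-field homogeneity $E\cdot F_W=(3-\delta)F_W$ -- are routine bookkeeping with the weights of the $v_k$ and hence of the $t_\alpha$.
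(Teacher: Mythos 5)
The paper offers no proof of this theorem at all — it is quoted as a classical result with the citation \cite{D1,S1,ST} — and your proposal ultimately defers the one genuinely hard step (flatness of the residue pairing and the existence of flat coordinates, i.e.\ K.~Saito's primitive-form theory and M.~Saito's existence result) to exactly those same references. The surrounding formal verifications you sketch (total symmetry of the residue $3$-tensor, potentiality and WDVV from associativity once flat coordinates exist, polynomiality and the Euler-field homogeneity from the quasi-homogeneous weights) are standard and correct, so your treatment is essentially the same as the paper's.
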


For the cases of $A_N$ and $D_N$ singularities, the flat coordinates of the theorem above were investigated by Noumi and Yamada in \cite{NY}. We use their result in what follows applying the certain rescaling of the coordinates that makes the formulae simpler. They also gave the formulae for the potentials $F_W$.

\subsection{\texorpdfstring{$A_N$}{AN} and \texorpdfstring{$D_N$}{DN} Frobenius manifold potentials}
Let $W$ be either $A_N$ or $D_N$. The potential $F_W$ is a polynomial in $t_1,\dots,t_N$ with rational coefficients subject to the quasi-homogeneity condition $E_W \cdot F_W = (3 - \delta_W) F_W$ with
\begin{align*}
    & E_{A_N} = \sum_{\alpha=1}^{N}\frac{N+2-\alpha}{N+1} t_\alpha \frac{\p}{\p t_\alpha}, \quad \delta_{A_N} = \frac{N-1}{N+1},
    \\
    & E_{D_N} = \sum_{\alpha=1}^{N-1}\frac{N-\alpha}{N-1} t_\alpha \frac{\p}{\p t_\alpha} + \frac{N}{2(N-1)} t_N \frac{\p}{\p t_N}, \quad \delta_{D_N} = \frac{N-2}{N-1}. 
\end{align*}
The pairing $\eta$ reads
\begin{align*}
    & \eta_{\alpha,\beta} = \delta_{\alpha+\beta,N+1} \quad & \text{ for } W = A_N,
    \\
    & \eta_{\alpha,\beta} = 
    \begin{cases}
    1 \quad \text{when} \quad \alpha = \beta = N,
    \\
    \delta_{\alpha+\beta,N} \quad \text{otherwise}.
    \end{cases} \quad & \text{ for } W = D_N.
\end{align*}
We see that for all these $W$ for any $\alpha \in \{1,\dots,N\}$ there exists a unique integer $\bar\alpha \in \{1,\dots,N\}$ such that $\eta_{\alpha,\bar\alpha}= 1$.

For $W = A_N, D_N$ Noumi-Yamada gave the formulae for the potential $F_W(t_1,\dots,t_N)$ in the following way. They introduce functions $\psi^{(r)}_\gamma \in \QQ[v_1,\dots,v_N]$ depending of the unfolding variables $v_k$ as above, s.t. for all $1 \le \alpha \le N$ the following equations hold:
\begin{align*}
    \frac{\p F_W}{\p t^\alpha} &= \psi^{(2)}_{\overline \alpha}(t_1,\dots,t_N),
    \\
    t_\alpha & = \psi_\alpha^{(1)}(v_1,\dots,v_N).
\end{align*}
It is only reasonable to consider the potential $F_W$ in flat coordinates $t_k$ and therefore it is important to invert the above formula of \cite{NY} in order to express $v_k = v_k(t)$.

\subsection{\texorpdfstring{$A_N$}{AN} case}
We have $\overline \alpha = N+1-\alpha$ and
\begin{align*}
    \psi_\gamma^{(r)}(v) &:= \sum_{\substack{\alpha_1,\dots,\alpha_N \ge 0 \\ \sum_{k=1}^{N} (N+2-k) \alpha_k = r(N+1) +1 - \gamma}} \left(-1\right)^{|\alpha|-r} \prod_{k=0}^{|\alpha|-1-r}(\gamma + k(N+1)) \prod _{k=1}^{N} \frac{v_k^{\alpha _k}}{\alpha _k!},
\end{align*}
where $|\alpha| = \sum_{k=1}^{N} \alpha_k$.

The inverted formulae were given by Buryak in \cite{B1} from the study of open Gromov-Witten theories:
\begin{align}\label{eq: An essential coordinate via flat}
     v_\gamma &=  \sum_{\substack{\alpha_1,\dots,\alpha_N \ge 0 \\ \sum_{k=1}^{N} (N+2-k) \alpha_k = N+2 - \gamma}} \frac{(|\alpha|+\gamma-2)!}{(\gamma-1)!} \prod _{k=1}^{N} \frac{t_k^{\alpha _k}}{\alpha _k!}.
\end{align}
Note that the condition $\sum_{k=1}^{N} (N+2-k) \alpha_k = N+2 - \gamma$ precisely ensures that $v_\gamma$ is quasi-homogeneous (w.r.t. the Euler field $E_{A_N}$) and its weight is equal to the weight of $t_\gamma$. 

\subsection{\texorpdfstring{$D_N$}{DN} case}
We have 
\[
    \overline \alpha = N-\alpha, \ 1\le\alpha \le N-1, \quad \overline N = N.
\]
and
\begin{align*}
    \psi^{(1)}_{\gamma} &= \sum_{\substack{\alpha_1,\dots,\alpha_{N-1} \ge 0 \\ \sum_{k=1}^{N-1} (N-k)\alpha_k = N- \gamma}} \left(-1\right)^{|\alpha|-1} \prod _{k=0}^{|\alpha|-2} (2 \gamma -1 +2 k (N-1)) \prod_{k=1}^{N-1} \frac{v_k^{\alpha_k}}{\alpha_k!}, \quad 1 \le \gamma \le N-1,
    \\
    \psi^{(1)}_{N} &= v_N.
\end{align*}
where $|\alpha| = \sum_{k=1}^{N-1} \alpha_k$.

In order to introduce $\psi^{(2)}_\gamma$ let us define the following combinatorial coefficients:
\begin{align*}
    A^{(1)}_{\gamma,\alpha} &:= \left(-1\right)^{|\alpha|-2} \prod _{k=0}^{|\alpha|-3} (2\gamma -1 + 2 k (N-1)), 
    \\
    A^{(2)}_{\gamma,\alpha} &:= \left(-1\right)^{|\alpha|-1} \prod _{k=0}^{|\alpha|-2} (2 \gamma -1 + 2 k (N-1)), \qquad 1 \le \gamma \le N-2,
    \\
    A^{(2)}_{N-1,\alpha} &:= 2.
\end{align*}
Then
\begin{align*}
    \psi_\gamma^{(2)}(v) &:= \sum_{\substack{\alpha_1,\dots,\alpha_{N-1} \ge 0 \\ \sum_{k=1}^{N-1} (N-k)\alpha_k = 2(N-1) +1- \gamma}} A^{(1)}_{\gamma,\alpha} \prod _{k=1}^{n-1} \frac{v_k^{\alpha _k}}{\alpha _k!}
     \\
     &
    + \sum_{\substack{\alpha_1,\dots,\alpha_{N-1} \ge 0 \\ \sum_{k=1}^{N-1} (N-k)\alpha_k = N-1-\gamma}} \frac{A^{(2)}_{\gamma,\alpha}}{2} \prod _{k=1}^{n-1} \frac{v_k^{\alpha _k}}{\alpha _k!} \frac{v_N^2}{2}, \qquad  1 \le \gamma \le N-1,
    \\
    \psi_{N}^{(2)}(v) &:= v_1v_N,
\end{align*}

\subsection{\texorpdfstring{$B_N$}{BN} Frobenius manifold potential}
This Frobenius manifold does not correspond to a deformation theory of a hypersurface singularity. This makes its definition more involved. It was shown by Zuber in \cite{Z} that the following equation holds
\begin{equation}\label{eq: Bn via An}
    F_{B_N}(t_1,\dots,t_N) = F_{A_{2N-1}}(t_1,0,t_2,0,t_3,\dots,t_N).
\end{equation}
We 
use
the above equation
as the definition of $F_{B_N}$.
It follows that $\eta_{\alpha,\beta} = \delta_{\alpha+\beta,N+1}$ and
\[
    E_{B_N} = \sum_{\alpha=1}^{N}\frac{N +1 -\alpha}{N} t_\alpha \frac{\p}{\p t_\alpha}, \quad \delta_{B_N} = \frac{N-1}{N}.
\]

\section{Stabilization of \texorpdfstring{$A_N$}{AN}, \texorpdfstring{$B_N$}{BN}, and \texorpdfstring{$D_N$}{DN} potentials}
\label{sec:stabpot}
In this section we discuss in details the structure of $A_N$, $B_N$ and $D_N$ Frobenius manifold potentials. In particular, we prove the respective stabilization statements in Theorem~\ref{theorem: An stabilization}, Proposition~\ref{prop: Bn stabilization} and Theorem~\ref{theorem: Dn stabilization}.

\subsection{\texorpdfstring{$A_N$}{AN} case}
\begin{theorem}\label{theorem: An stabilization}
    For any $N_2 > N_1 \ge 1$ and $\alpha,\beta$, s.t. $1 \le \alpha,\beta \le N_1$, $\alpha + \beta \le N_1+1$ we have
    \[
        \left.\frac{\p^2 F_{A_{N_1}}}{\p t_\alpha \p t_\beta} \right|_{\forall \gamma\; t_{N_1+1-\gamma} = s_\gamma} = \left.\frac{\p^2 F_{A_{N_2}}}{\p t_\alpha \p t_\beta} \right|_{\forall \gamma\; t_{N_2 + 1- \gamma} = s_\gamma},
    \]
    understood as an equality of polynomials in $s_\bullet$.
\end{theorem}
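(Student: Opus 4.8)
The plan is to reduce the asserted equality of polynomials to an $N$-independence statement for the Taylor coefficients of $F_{A_N}$ at the origin, and then to extract that independence from the explicit formulas of Noumi--Yamada and Buryak.

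Write $F_{A_N}$ in the variables $s_\gamma:=t_{N+1-\gamma}$. Quasi-homogeneity with respect to $E_{A_N}$ forces a monomial $s_{\gamma_1}\cdots s_{\gamma_m}$ to occur in $\p_\alpha\p_\beta F_{A_N}$ only when $\sum_{i=1}^m(\gamma_i+1)=\alpha+\beta$; in particular $m\ge 1$ and every $\gamma_i\le\alpha+\beta-m\le N_1$, so in the relevant degrees the two substituted polynomials live in the same ring, and it suffices to show that
\[
\left.\frac{\p^{m+2}F_{A_N}}{\p t_\alpha\,\p t_\beta\,\p t_{N+1-\gamma_1}\cdots\p t_{N+1-\gamma_m}}\right|_{\bt=0}
\]
is independent of $N$ whenever $\alpha+\beta\le N+1$. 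By the Noumi--Yamada identity $\p_{t_{N+1-\gamma_m}}F_{A_N}=\psi^{(2)}_{\gamma_m}(v(t))$ this number equals the value at $\bt=0$ of $\p_{t_\alpha}\p_{t_\beta}\p_{t_{N+1-\gamma_1}}\cdots\p_{t_{N+1-\gamma_{m-1}}}$ applied to $\psi^{(2)}_{\gamma_m}(v(t))$, with $v(t)$ given by Buryak's formula~\eqref{eq: An essential coordinate via flat}.

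Expanding $\psi^{(2)}_{\gamma_m}$ into monomials $\prod_k v_k^{\epsilon_k}$ and each $v_k$ into monomials in the $t_j$, the coefficient becomes a finite sum indexed by a multiset $E$ of $v$-indices together with an admissible way of distributing the factors $t_\alpha,t_\beta,t_{N+1-\gamma_1},\dots,t_{N+1-\gamma_{m-1}}$ among its members. Rewriting the homogeneity constraint attached to $\psi^{(2)}_{\gamma_m}$ and the one attached to each $v_k$ in terms of the shifted indices shows that these constraints are automatically compatible and collapse to the single $N$-free relation $\sum(\gamma_i+1)=\alpha+\beta$, and that no $v_k$ can carry both $\alpha$ and $\beta$ (that would force its index below $1$); hence each occurring $v_k$ has its index pinned either to a bounded integer depending only on $\alpha,\beta$ and the $\gamma_i$, or to $N+2$ minus such an integer. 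Here the hypothesis $\alpha+\beta\le N+1$ is precisely what guarantees that all the ``large'' $v_k$-indices that must appear still lie in $\{1,\dots,N\}$, so that no contribution is lost --- outside this range one of them does, which is why stabilization fails there.

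The remaining task --- and the step I expect to be the main obstacle --- is to show that the leftover $N$-dependence cancels after summing over all distributions. It enters only through the factors $\gamma_m+l(N+1)$ in the expansion of $\psi^{(2)}_{\gamma_m}$ (present when $|E|\ge 4$) and through the products of consecutive integers $\tfrac{(|\delta|+k-2)!}{(k-1)!}$ in Buryak's formula evaluated at $k$ of size $N$ (present when a ``large'' $v_k$ carries several ``large'' factors $t_{N+1-\gamma_i}$). Already for $m=3$ one checks that the leading-in-$N$ part produced at level $|E|=r$ is compensated exactly by a part produced at level $|E|=r-1$ in which one of the $v_k$'s is merged so as to absorb one more large factor, so that the $N$-dependence unwinds by descent on $|E|$; carrying this bookkeeping out for all $m$ is the combinatorial crux, most transparently handled by packaging the two hypergeometric-type coefficient families into a single generating function from which the $N$-dependent parameters can be eliminated. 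The output is an $N$-free closed form for the coefficient --- which moreover equals $(-1)^{m-1}(m-1)!\,\widehat P_{\alpha\beta}(\gamma_1,\dots,\gamma_m)$, thereby also proving the identity stated in the introduction. A route that avoids Buryak's formula is an induction on the weight $\alpha+\beta$ using the WDVV recursion for the structure constants $c^{k}_{\alpha\beta}=\p_{s_k}(\p_\alpha\p_\beta F_{A_N})$, based on the $N$-free data $c^{k}_{1\beta}=\delta_{k\beta}$ and $c^{k}_{\alpha\beta}|_{\bt=0}=\delta_{k,\alpha+\beta-1}$ and on the fact that for $\alpha+\beta\le N+1$ the contracted index in WDVV runs over a range fixed by quasi-homogeneity rather than by $N$; the boundary case $\alpha+\beta=N+1$ would then need a brief separate argument.
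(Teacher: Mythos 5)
Your reduction of the theorem to the $N$-independence of the Taylor coefficients $\left.\p^{m+2}F_{A_N}/\p t_\alpha\p t_\beta\p t_{N+1-\gamma_1}\cdots\p t_{N+1-\gamma_m}\right|_{\bt=0}$ (for $\alpha+\beta\le N+1$) is correct and is exactly the right normalization, but the argument you then give does not close. You expand $\p F_{A_N}/\p t_{N+1-\gamma_m}=\psi^{(2)}_{\gamma_m}(v(\bt))$ and push Buryak's formula \eqref{eq: An essential coordinate via flat} through it, and you yourself identify that the coefficients of $\psi^{(2)}$ contain the $N$-dependent products $\prod(\gamma+k(N+1))$ (for $v$-degree at least $4$) and that Buryak's coefficients $(|\alpha|+k-2)!/(k-1)!$ become $N$-dependent when a $v_k$ with $k$ close to $N$ absorbs several derivatives. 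The entire content of the theorem in this approach is precisely that these $N$-dependent contributions cancel after summing over all distributions; you check this only for $m=3$, call the general case ``the combinatorial crux,'' and gesture at a descent on the $v$-degree or a generating-function packaging without carrying either out. As it stands this is a conjecture about a cancellation, not a proof; the alternative WDVV-induction route you mention at the end is likewise only a sketch (and its boundary case $\alpha+\beta=N+1$ is explicitly left open). So there is a genuine gap at the central step.

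It is instructive that the paper's proof never meets this cancellation problem, because it never uses the explicit coefficients of $\psi^{(2)}$ at all. Instead it writes the third derivatives $c^{(N)}_{\alpha\beta\gamma}$ by the chain rule through the unfolding coordinates, $c^{(N)}_{\alpha\beta\gamma}=\sum\Psi\,\Psi\,\Psi\,c_{ab}^r$, where in the range $a+b\le N+1$ the structure constants in the $v$-basis are simply $c_{ab}^r=\delta_{r,a+b-1}$ by \eqref{eq: A_N simple str constants in v}, i.e. manifestly $N$-free; quasi-homogeneity of $t_\bullet(v)$ (which forces $\p t_\alpha/\p v_a=0$ unless $\alpha\le a$) confines the summation to this range whenever $\alpha+\beta\le N_1+1$. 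The only remaining input is the termwise stabilization of the Jacobians $\p v_a/\p t_\alpha$ (and dually $\p t_\gamma/\p v_a$) after the substitution $t_{N+1-\gamma}=s_\gamma$, which follows from a one-line rewriting of the weight constraint in Buryak's formula --- no cancellation between $N$-dependent terms is ever required. If you want to salvage your route, you must actually prove the cancellation (e.g. by your proposed descent, made precise for all $m$), or else switch to an argument of the paper's type in which the $N$-dependence never enters intermediate expressions; note also that the enumerative identity $(-1)^{m-1}(m-1)!\widehat P_{\alpha\beta}(\gamma_\bullet)$, which you hope to obtain as a by-product, is derived in the paper separately from a Fay-type identity and is not needed for, nor readily a substitute for, the stabilization statement.
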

\begin{proof}
    In this proof we have to make use of both flat coordinates $t_\bullet$ and unfolding coordinates $v_\bullet$.
    Denote by $c_{ab}^r = c_{ab}^r(v)$ the structure constants in the basis $\p / \p v_\bullet$. Consider also the basis change matrices
    \[
        \Psi^\alpha_a := \frac{\p t_\alpha}{\p v_a}, \ \Psi_\alpha^a := \frac{\p v_a}{\p t_\alpha}
    \]
    where we use the greek and latin letters for $t$ and $v$ coordinates repsectively. We have ${\sum_a \Psi^\alpha_a \Psi^a_\beta = \delta^\alpha_\beta}$.
    \begin{lemma}
        The matrices $\Psi^\alpha_a$ and $\Psi_\alpha^a$ stabilize. Namely, for $v_a^{(N_\bullet)}
        = v_a^{(N_\bullet)}(\bt)$ being the expression of unfolding coordinates via flat coordinates 
        for $A_{N_\bullet}$ we have
        \[
            \left.\frac{\p v^{(N_1)}_a}{\p t_{\alpha}} \right|_{\forall \gamma\; t_{N_1+1-\gamma} = s_\gamma} = \left.\frac{\p v^{(N_2)}_a}{\p t_{\alpha}} \right|_{\forall \gamma\; t_{N_2+1-\gamma} = s_\gamma}
        \]
        for $1 \le \alpha,a \le N_1$ and $N_1 < N_2$.
    \end{lemma}
    \begin{proof}
        By using Eq.\eqref{eq: An essential coordinate via flat} we have
        \begin{align*}
            \left.\frac{\p v^{(N)}_a}{\p t_\delta} \right|_{t_{N+1-\gamma} = s_\gamma} & = 
            \sum_{\substack{\alpha_1,\dots,\alpha_{N} \ge 0 \\ \sum_{k=1}^{N} (N+2-k) \alpha_k = \delta - a}} \left.\frac{(|\alpha|+ a-1)!}{(a-1)!} \prod _{k=1}^{N} \frac{t_k^{\alpha _k}}{\alpha _k!} \right|_{t_{N+1-\gamma} = s_\gamma}
            \\
            & = \sum_{\substack{\alpha_1,\dots,\alpha_{N} \ge 0 \\ |\alpha| + \sum_{k=1}^{N} k \alpha_k = \delta - a}} \frac{(|\alpha|+ a-1)!}{(a-1)!} \prod _{k=1}^{N} \frac{s_k^{\alpha_k}}{\alpha _k!}
            \\
            & = \sum_{m \ge 0} \frac{1}{m!} \sum_{\alpha_1+\dots + \alpha_m = \delta-a-m} \frac{(m + a-1)!}{(a-1)!} s_{\alpha_1}\cdot\dots\cdot s_{\alpha_m}.
        \end{align*}
        It is now straightforward to see that the last expression we obtained  
        does not depend on $N$, what concludes the proof.
    \end{proof}
    
    For any $\gamma$ and $N\ge 1$ we have
    \[
        c^{(N)}_{\alpha\beta\gamma} := \frac{\p^3 F_{A_{N}}}{\p t_\alpha \p t_\beta \p t_\gamma} = \sum_{r=1}^N \sum_{a,b=1}^N \Psi_r^{N+1-\gamma}\Psi^a_\alpha\Psi^b_\beta c_{ab}^r.
    \]
    The change of coordinates $t_\alpha = t_\alpha(v)$ is quasi-homogeneous. In particular, it follows that $\Psi^\alpha_a = 0$ unless $\alpha \le a$. Therefore for any $K \le N$ and $\alpha + \beta \le K+1$ we have
    \[
        c^{(N)}_{\alpha,\beta,N+1-\gamma} 
        = \sum_{r=1}^N \sum_{\substack{1 \le a,b \le N \\ a+b \le K+1}} \Psi_r^\gamma\Psi^a_\alpha\Psi^b_\beta c_{ab}^r
        = \sum_{\substack{1 \le a,b \le N \\ a+b \le K+1}} \Psi_{a+b-1}^\gamma\Psi^a_\alpha\Psi^b_\beta,
    \]
    where we have used Eq.\eqref{eq: A_N simple str constants in v} to get the second equality.
    
    By using the lemma above and the quasi-homogeneity of the change of the variables $t_\bullet = t_\bullet(v)$ we have for all $\alpha + \beta \le N_1 + 1$:
    \[
        \left.c^{(N_1)}_{\alpha,\beta,N_1+1-\gamma} \right|_{t_{N_1+1-\gamma} = s_\gamma} = \left.c^{(N_2)}_{\alpha,\beta,N_2+1-\gamma} \right|_{t_{N_2+1-\gamma} = s_\gamma},
    \]
    which concludes the proof of the theorem.
\end{proof}

By using the quasi-homogeneity condition of $F_{A_N}$ we have for any $\alpha+\beta \le N+1$ the equality
\begin{equation}\label{eq: Fan potential}
    \frac{\p^2 F_{A_N}}{\p t_\alpha \p t_\beta} = \sum_{\gamma=1}^N \frac{1+\gamma}{\alpha+\beta} t_{N+1-\gamma} \sum_{a,b=1}^N \frac{\p t^\gamma}{\p v_{a+b-1}} \frac{\p v_a}{\p t_\alpha} \frac{\p t_b}{\p v_{\beta}}.
\end{equation}

\subsection{\texorpdfstring{$B_N$}{BN} case}
The following stabilization proposition is straightforward in proof but nontrivial in its statement.
\begin{proposition}\label{prop: Bn stabilization}
    For any $N_2 > N_1 \ge 1$ and $\alpha,\beta$, s.t. $1 \le \alpha,\beta \le N_1$, $\alpha + \beta \le N_1+1$ we have
    \[
        \left.\frac{\p^2 F_{B_{N_1}}}{\p t_\alpha \p t_\beta} \right|_{\forall \gamma\; t_{N_1+1-\gamma} = s_\gamma} = \left.\frac{\p^2 F_{B_{N_2}}}{\p t_\alpha \p t_\beta} \right|_{\forall \gamma\; t_{N_2+1-\gamma} = s_\gamma},
    \]
    understood as an equality of polynomials in $s_\bullet$.
\end{proposition}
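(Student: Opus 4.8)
The plan is to deduce the statement from the already--proved $A_N$ stabilization, Theorem~\ref{theorem: An stabilization}, by means of Zuber's identity \eqref{eq: Bn via An}. Write $T_1,\dots,T_{2N-1}$ for the flat coordinates of $F_{A_{2N-1}}$, and let $\iota\colon\CC^N\to\CC^{2N-1}$ be the linear embedding determined by $\iota(\bt)_{2k-1}=t_k$ and $\iota(\bt)_{2k}=0$; then \eqref{eq: Bn via An} reads $F_{B_N}=F_{A_{2N-1}}\circ\iota$. First I would differentiate this relation: since $\iota$ is linear with $\p\iota_j/\p t_\alpha=\delta_{j,2\alpha-1}$, the chain rule applied twice gives the identity of polynomials in $t_1,\dots,t_N$
\[
  \frac{\p^2 F_{B_N}}{\p t_\alpha\,\p t_\beta}(t_1,\dots,t_N)=\frac{\p^2 F_{A_{2N-1}}}{\p T_{2\alpha-1}\,\p T_{2\beta-1}}(t_1,0,t_2,0,\dots,t_N).
\]

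Next I would match the two changes of variables. On the $B_N$ side, $t_{N+1-\gamma}=s_\gamma$ means $t_k=s_{N+1-k}$, so by the display above it sends $T_{2k-1}\mapsto s_{N+1-k}$ and $T_{2k}\mapsto 0$. On the $A_{2N-1}$ side, $\eta_{\alpha,\beta}=\delta_{\alpha+\beta,2N}$, so the substitution of Theorem~\ref{theorem: An stabilization} applied to $A_{2N-1}$, namely $T_{2N-\gamma}=s'_\gamma$, is equivalently $T_j=s'_{2N-j}$; since $j$ and $2N-j$ have the same parity, this substitution followed by the specialization $s'_{2\ell}\mapsto 0$, $s'_{2\ell-1}\mapsto s_\ell$ also sends $T_{2k-1}\mapsto s_{N+1-k}$ and $T_{2k}\mapsto 0$. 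Hence, for every $N$, one has the identity of polynomials in $s_\bullet$
\[
  \left.\frac{\p^2 F_{B_N}}{\p t_\alpha\,\p t_\beta}\right|_{\forall\gamma\ t_{N+1-\gamma}=s_\gamma}
  \;=\;
  \left.\left(\left.\frac{\p^2 F_{A_{2N-1}}}{\p T_{2\alpha-1}\,\p T_{2\beta-1}}\right|_{\forall\gamma\ T_{2N-\gamma}=s'_\gamma}\right)\right|_{s'_{2\ell-1}=s_\ell,\;s'_{2\ell}=0}.
\]

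Finally I would apply Theorem~\ref{theorem: An stabilization} to the pair $A_{2N_1-1}$, $A_{2N_2-1}$ (note $2N_1-1<2N_2-1$ since $N_1<N_2$) with the indices $2\alpha-1$ and $2\beta-1$: the required hypotheses $1\le 2\alpha-1,\,2\beta-1\le 2N_1-1$ and $(2\alpha-1)+(2\beta-1)=2(\alpha+\beta-1)\le 2N_1=(2N_1-1)+1$ follow at once from $1\le\alpha,\beta\le N_1$ and $\alpha+\beta\le N_1+1$. That theorem gives the equality, as polynomials in $s'_\bullet$, of $\p^2 F_{A_{2N_1-1}}/\p T_{2\alpha-1}\p T_{2\beta-1}$ after $T_{2N_1-\gamma}=s'_\gamma$ and of $\p^2 F_{A_{2N_2-1}}/\p T_{2\alpha-1}\p T_{2\beta-1}$ after $T_{2N_2-\gamma}=s'_\gamma$; specializing $s'_{2\ell-1}=s_\ell$, $s'_{2\ell}=0$ on both sides and invoking the preceding display for $N=N_1$ and $N=N_2$ yields the proposition. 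The only real work is the index bookkeeping in the middle paragraph — checking that the $B_N$ substitution coincides with the ``odd part'' of the $A_{2N-1}$ substitution and that the admissible range of $(\alpha,\beta)$ for $B_{N_1}$ lands inside the admissible range for $A_{2N_1-1}$; there is no analytic difficulty once Zuber's identity and Theorem~\ref{theorem: An stabilization} are available, which is exactly why the statement, though perhaps surprising, is straightforward to prove.
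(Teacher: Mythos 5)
Your argument is correct and is essentially the paper's own proof: the paper deduces Proposition~\ref{prop: Bn stabilization} directly from the definition \eqref{eq: Bn via An} of $F_{B_N}$ via $F_{A_{2N-1}}$ together with Theorem~\ref{theorem: An stabilization}, exactly as you do. Your version merely spells out the index bookkeeping (the chain rule through the embedding $t_k\mapsto T_{2k-1}$, the matching of the two antidiagonal substitutions, and the verification that $\alpha+\beta\le N_1+1$ implies $(2\alpha-1)+(2\beta-1)\le 2N_1$), which the paper leaves implicit.
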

\begin{proof}
    This follows immediately from the definition of $F_{B_N}$ and Theorem~\ref{theorem: An stabilization}.
\end{proof}

The $B_N$ Frobenius manifolds were introduced via the $A_N$ Frobenius manifolds.
In what follows it will be useful to build up the connection of the $B_N$ Frobenius manifolds to the $D_N$ Frobenius manifolds too.

\begin{proposition}\label{proposition: Bn inside Dn}
    We have
    \begin{equation}\label{eq: Bn via Dn}
        F_{B_N}(t_1,\dots,t_N) = F_{D_{N+1}}(t_1,t_2, \dots, t_{N},0).
    \end{equation}
\end{proposition}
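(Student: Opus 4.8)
The plan is to reduce the claim to an already-established identity by exhibiting a precise relationship between the $A_{2N-1}$ and $D_{N+1}$ Frobenius potentials in the relevant variables. Recall that $F_{B_N}$ is \emph{defined} by Eq.~\eqref{eq: Bn via An} as $F_{A_{2N-1}}(t_1,0,t_2,0,t_3,\dots,t_N)$, so the statement we must prove is equivalent to
\[
    F_{A_{2N-1}}(t_1,0,t_2,0,t_3,\dots,t_N) = F_{D_{N+1}}(t_1,t_2,\dots,t_N,0).
\]
The natural route is to compare the two sides via the unfolding description of Section~\ref{sec:frobstruc}: both $F_{A_{2N-1}}$ and $F_{D_{N+1}}$ are governed by Noumi--Yamada functions $\psi^{(r)}_\gamma$ built from the respective singularity, and the substitution $v_N = 0$ in $\Lambda_{D_{N+1}}$ kills the $v_N y$ term, turning $\Lambda_{D_{N+1}}$ into $x^{N}/N + xy^2 + \sum_{k=1}^{N} v_k x^{k-1}$, which after completing the square / eliminating $y$ (since $\p\Lambda/\p y = 2xy = 0$ on the critical locus) is closely tied to the $A$-type unfolding in the variable $x$. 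First I would make this change of variables explicit at the level of the Jacobi rings $\A_v$ and the residue pairings, checking that the restriction $v_N=0$ on the $D_{N+1}$ side matches the restriction of odd-index flat coordinates to zero on the $A_{2N-1}$ side.

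Alternatively — and this is likely the cleaner path — I would argue purely at the level of Frobenius-manifold data without invoking singularities directly. The locus $\{t_2=t_4=\dots=0\}$ in the $A_{2N-1}$ Frobenius manifold is invariant under the product $\circ$ (this is visible from Eq.~\eqref{eq: A_N simple str constants in v}: $\p/\p v_a \circ \p/\p v_b = \p/\p v_{a+b-1}$, so even-indexed directions form an ``ideal'' in an appropriate sense, and the fixed locus of the involution $t_{2k}\mapsto -t_{2k}$ is a sub-Frobenius-manifold), and Zuber's identity \eqref{eq: Bn via An} says the restricted potential is $F_{B_N}$. Likewise $\{t_{N+1}=0\}$ — i.e.\ $v_N=0$ — in $D_{N+1}$ should be checked to be a sub-Frobenius-manifold, using the $D$-type structure constants and the fact that $\p/\p v_N$ corresponds to the $y$-direction which appears quadratically. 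Both restricted structures are $N$-dimensional Frobenius manifolds with antidiagonal metric $\eta_{\alpha,\beta}=\delta_{\alpha+\beta,N+1}$ and Euler field $E_{B_N}$; one then invokes a uniqueness/rigidity statement (the relevant moduli of such Frobenius manifolds with these charges is a point, or one directly matches the third derivatives of the potentials term by term).

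The main obstacle I anticipate is bookkeeping the two different coordinate normalizations: the flat coordinates of $D_{N+1}$ (with their Noumi--Yamada rescaling) need to be matched to the odd-indexed flat coordinates of $A_{2N-1}$ up to scalars, and one must verify that the scalars are all $1$ in the chosen normalization — otherwise the identity holds only up to rescaling $t_\bullet$. Concretely, I would compute $\p^3 F_{D_{N+1}}/\p t_\alpha\p t_\beta\p t_\gamma|_{t_{N+1}=0}$ from the structure constants $c_{ab}^r$ of the $D_{N+1}$ Jacobi ring restricted to $v_N=0$, compare with the analogous $A_{2N-1}$ expression restricted to the even-vanishing locus (which equals the $B_N$ one by Zuber), and confirm agreement; the quasi-homogeneity weights match by the displayed formulas for $E_{D_N}$ and $E_{B_N}$, which is a good consistency check. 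Once the third derivatives agree and both potentials vanish to order $\ge 3$ at the origin with the same quadratic part fixed by $\eta$, the potentials themselves coincide.
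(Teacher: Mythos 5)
Your reduction of the statement to $F_{A_{2N-1}}(t_1,0,t_2,0,\dots,t_N)=F_{D_{N+1}}(t_1,\dots,t_N,0)$ is of course right (this is just unwinding the definition \eqref{eq: Bn via An}), but the route you favour has a genuine gap. Showing that $\{t_2=t_4=\dots=0\}\subset A_{2N-1}$ and $\{t_{N+1}=0\}\subset D_{N+1}$ are sub-Frobenius manifolds with the same $\eta$ and the same Euler field, and then invoking rigidity, can only ever identify the two restrictions \emph{up to isomorphism of Frobenius manifolds}: rescalings $t_\alpha\mapsto c_\alpha t_\alpha$ with $c_\alpha c_{\bar\alpha}=1$ preserve the antidiagonal metric and the diagonal Euler field but change the potential, so ``the moduli space is a point'' is false at the level of potentials in fixed flat coordinates, which is exactly what the proposition asserts. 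The paper flags this explicitly in the remark immediately after the proposition: an isomorphism of the two submanifolds is not enough, the flat structures (and hence the potentials) must be matched on the nose. You do anticipate this (``the scalars must all be $1$''), but the fix you offer --- compare third derivatives via the Jacobi-ring structure constants --- is precisely where all the content lives and is left unexecuted: to compute $\p^3 F/\p t_\alpha\p t_\beta\p t_\gamma$ from $c_{ab}^r(v)$ you need the explicit change of variables $t(v)$ on both sides, i.e.\ the Noumi--Yamada functions. The paper's proof consists exactly of that comparison: it first notes that $t_{2a}=0\Leftrightarrow v_{2a}=0$ for $A_{2N-1}$ and $t_{N+1}=0\Leftrightarrow v_{N+1}=0$ for $D_{N+1}$, and then checks from the closed formulas that ${}^A\psi^{(1)}_{2a-1}(v_1,0,v_2,0,\dots)={}^D\psi^{(1)}_a(v_1,\dots,v_{N-1},0)$ and ${}^A\psi^{(2)}_{2N-2b+1}={}^D\psi^{(2)}_{N+1-b}$, the index shifts accounting for the two involutions $\bar\alpha$; this pins down both the flat coordinates and the first derivatives of the potentials simultaneously, which is what your plan still has to supply.

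A smaller point: in your first route, ``eliminating $y$'' from $\Lambda_{D_{N+1}}|_{v_{N+1}=0}=x^N/N+xy^2+\sum_{k=1}^N v_kx^{k-1}$ does not land you near the $A$-type unfolding --- setting $y=0$ produces an $(N-1)$-dimensional $A$-type Jacobi ring, not the $N$-dimensional structure you need, and the critical locus $2xy=0$ has the extra branch $x=0$; the actual relation between the restricted $D_{N+1}$ and restricted $A_{2N-1}$ data is not obtained by killing $y$, which is another reason the explicit $\psi$-function comparison (or an equivalent computation with the inverted coordinates \eqref{eq: An essential coordinate via flat}) cannot be bypassed.
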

\begin{proof}
    For $D_{N+1}$ Frobenius manifold setting $t_{N+1} = 0$ is equivalent to setting $v_{N+1} = 0$. One shows easily that for $A_{2N-1}$ Frobenius manifold setting all $t_{2a} = 0$ is equivalent to setting all $v_{2a} = 0$. 

    We compare the functions $\psi^{(r)}$ for both cases. In this proof we denote by ${}^A\psi_\gamma^{(r)}$ and ${}^D\psi_\gamma^{(r)}$ the respective $\psi$--functions of $A_{2N-1}$ and $D_{N+1}$ respectively.    
    It follows immediately from the definition that we have
    \[
        {}^A\psi_{2a-1}^{(1)}(v_1,0,v_2,0,\dots,v_{2N-1})
        =
        {}^D\psi_{a}^{(1)}(v_1,v_2,\dots,v_{N-1},0)
    \]
    for all $1 \le a \le N-1$.
    
    Comparing the $\psi^{(2)}$--functions we should take care of the involution on both sides. It remains to note that ${}^A\psi^{(2)}_{2N-2b+1} = {}^D\psi^{(2)}_{N+1-b}$ which completes the proof.
\end{proof}

\begin{corollary}
The following formula expresses the dependence of $D_N$ coordinate $v$ on the flat coordinate $t$
    \begin{align}\label{eq: Dn essential coordinate via flat}
     v_b &=  \sum_{\substack{\alpha_1,\dots,\alpha_{N-1} \ge 0 \\ \sum_{k=1}^{N-1} (N-k) \alpha_k = N - b}} \frac{(|\alpha|+2b-3)!}{(2b-2)!} \prod _{k=1}^{N-1} \frac{t_k^{\alpha _k}}{\alpha _k!},
     \\
     v_N &= t_N,
    \end{align}
    where $|\alpha| = \sum_{k=1}^{N-1} \alpha_k$.
\end{corollary}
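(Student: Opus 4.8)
The plan is to deduce this formula from Buryak's inversion formula \eqref{eq: An essential coordinate via flat}, applied to the $A_{2N-3}$ Frobenius manifold, rather than by attempting a direct Lagrange--inversion computation. The point is the chain of identifications: the ``$x$''--part of the $D_N$ unfolding coordinates coincides with the unfolding coordinates of $B_{N-1}$, which in turn are the odd--indexed unfolding coordinates of $A_{2N-3}$.

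First I would dispose of the trivial equality $v_N = t_N$. By the Noumi--Yamada description the flat coordinates of $D_N$ satisfy $t_\gamma = \psi^{(1)}_\gamma(v)$, where $\psi^{(1)}_\gamma$ depends only on $v_1,\dots,v_{N-1}$ for $\gamma \le N-1$ and $\psi^{(1)}_N = v_N$. Hence the change of variables is block--triangular: $(t_1,\dots,t_{N-1})$ is a function of $(v_1,\dots,v_{N-1})$ alone, while $t_N = v_N$. Inverting gives $v_N = t_N$ and shows that each $v_b$ with $1\le b\le N-1$ is a function of $(t_1,\dots,t_{N-1})$ only; in particular it coincides with the corresponding essential coordinate of the $B_{N-1}$ Frobenius manifold, by Proposition~\ref{proposition: Bn inside Dn} (used with its internal index equal to $N-1$).

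Next I would compute that $B_{N-1}$ coordinate from the corresponding $A_{2N-3}$ coordinate. By \eqref{eq: Bn via An} (with index $N-1$), $F_{B_{N-1}}$ is $F_{A_{2N-3}}$ with the even--indexed flat coordinates set to zero under the identification $t_k = t^A_{2k-1}$. As in the proof of Proposition~\ref{proposition: Bn inside Dn}, on the level of unfolding coordinates this amounts to setting $v^A_{2j}=0$; concretely, in Buryak's formula for $A_{2N-3}$ the quasi--homogeneity constraint $\sum_{k=1}^{2N-3}(2N-1-k)\alpha_k = 2N-1-\gamma$ has $2N-1-k$ even for every odd $k$, so no monomial in the odd variables alone survives unless $\gamma$ is odd, which forces the even $v^A$'s to vanish. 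Comparing the explicit formulas for $\psi^{(1)}$ (the factor $(2N-3)+1 = 2(N-1)$ of $A_{2N-3}$ matches the $2(N-1)$ appearing in the $D_N$ formula) then gives ${}^A\psi^{(1)}_{2b-1}(v_1,0,v_2,0,\dots) = {}^D\psi^{(1)}_b(v_1,\dots,v_{N-1})$ under $v^A_{2k-1} = v^D_k$, for all $1\le b\le N-1$. Thus the restricted $A_{2N-3}$ coordinate change is exactly the $D_N$ one, and so is its inverse, whence $v_b = v^A_{2b-1}$ evaluated at $t^A_{2j}=0$.

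Finally I would substitute $\gamma = 2b-1$ and $t^A_{2j}=0$ into \eqref{eq: An essential coordinate via flat} for $A_{2N-3}$ and relabel $\alpha_k := \alpha^A_{2k-1}$, $t_k := t^A_{2k-1}$. The constraint collapses to $2\sum_{k=1}^{N-1}(N-k)\alpha_k = 2(N-b)$, i.e. $\sum_{k=1}^{N-1}(N-k)\alpha_k = N-b$; since $|\alpha|$ is unchanged and $\gamma-2 = 2b-3$, $\gamma-1 = 2b-2$, the coefficient becomes $(|\alpha|+2b-3)!/(2b-2)!$ and the monomial becomes $\prod_{k=1}^{N-1} t_k^{\alpha_k}/\alpha_k!$ --- which is precisely the claimed formula. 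I expect the only delicate point to be the index bookkeeping (the passage to $A_{2N-3}$ and the odd/even split), together with the elementary fact that restricting an invertible coordinate change to an invariant subspace and then inverting yields the same result as inverting and then restricting; this is legitimate here because the restricted map is itself a genuine change of variables. No genuinely hard step arises.
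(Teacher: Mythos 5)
Your proposal is correct and follows essentially the same route as the paper: the paper's proof is exactly ``Proposition~\ref{proposition: Bn inside Dn} (the identification of the odd sector of $A_{2N-3}$ with $D_N$ via $B_{N-1}$) plus Buryak's formula~\eqref{eq: An essential coordinate via flat}'', and your argument just spells out the index bookkeeping ($\gamma = 2b-1$, constraint halving, $(|\alpha|+2b-3)!/(2b-2)!$) and the parity/block-triangularity checks that the paper leaves implicit. No discrepancy to report.
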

\begin{proof}
    This follows immediately from the above proposition and formula~\eqref{eq: An essential coordinate via flat}.
\end{proof}

\begin{remark}
    One could also give a geometric proof of the isomorphism of two $N$--dimensional Frobenius submanifolds of $A_{2N-1}$ and $D_{N+1}$. However in what follows we need the flat structures of both submanifolds to agree. Namely we indeed want to fix not only the isomorphism class of both Frobenius submanifolds but the potentials too.
\end{remark}

\begin{remark}
    The construction of Dubrovin attributes a Frobenius manifold to a Weyl group. In that sense Frobenius manifolds theory could not distinguish between $B_N$ and $C_N$ root systems whose Weyl groups coincide. The two submanifolds above are just two ways of how to find the same Weyl group as a subgroup of $A_\bullet$ and $D_\bullet$ Weyl groups. 
    
    Studying the solutions to ``open'' WDVV equation it was found in \cite{BB1} that potentials $F_{A_{2N-1}}$ and $F_{D_{N+1}}$ can be accompanied with the ``open'' potentials $F^o_{A_{2N-1}}$ and $F^o_{D_{N+1}}$ being some new functions of $t_\bullet$ and one additional variable $s$. The open potential $F^o_{A_{2N-1}}$ is a polynomial in $s$ and the open potential $F^o_{D_{N+1}}$ is a Laurent polynomial with an order two pole in $s$. This made us hope that ``open'' theories could distinguish between $B_N$ and $C_N$ root systems. Unfortunately the equality of the above proposition holds for the open potentials too.
\end{remark}

\subsection{\texorpdfstring{$D_N$}{DN} case}\label{section: DN revisited}
For any fixed $N$ denote by $v_1^{(N)}(\bt)$ the polynomial expressing the $v_1$ coordinate of $D_N$ via $t_1,\dots,t_{N-1}$. The formulae of Noumi-Yamada show that
\begin{align}\label{eq: D_N potential explicit in t}
    \frac{\p F_{D_N}}{\p t_\gamma} &= \A_{\gamma}^{(N)} + \B_{\gamma}^{(N)} \cdot t_N ^2,
    \quad 1 \le \gamma \le N-1,
    \\
    \frac{\p F_{D_N}}{\p t_N} &= v_1^{(N)}(\bt) \cdot t_N,
\end{align}
with $\A_{\gamma}^{(N)},\B_{\gamma}^{(N)} \in \QQ[t_1,\dots,t_{N-1}]$. Namely, these functions do not depend on $t_N$.

\begin{theorem}\label{theorem: Dn stabilization}
    For any $N_2 > N_1 \ge 4$ 
    we have
    \begin{align*}
        &\left.\frac{\p v_1^{N_1}(\bt)}{\p t_\beta}\right|_{\forall \gamma \; t_{\gamma} = s_{N_1-\gamma}} = \left.\frac{\p v_1^{N_2}(\bt)}{\p t_\beta}\right|_{\forall \gamma \; t_{\gamma} = s_{N_2-\gamma}}, \qquad \forall \beta < \min(N_1,N_2),
        \\
        &\left.\frac{\p \A_{\alpha}^{N_1}(\bt)}{\p t_\beta}\right|_{\forall \gamma \; t_{\gamma} = s_{N_1-\gamma}} = \left.\frac{\p \A_{\alpha}^{N_2}}{\p t_\beta}\right|_{\forall \gamma \; t_{\gamma} = s_{N_2-\gamma}},
         \qquad \forall \alpha+\beta < \min(N_1,N_2)
    \end{align*}
    understood as an equality of polynomials in $s_\bullet$.
\end{theorem}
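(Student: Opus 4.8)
The plan is to establish the two displayed identities independently. The first I would prove by a direct computation with the explicit expansion~\eqref{eq: Dn essential coordinate via flat}, in exactly the spirit of the Lemma inside the proof of Theorem~\ref{theorem: An stabilization}. The second I would reduce to the already-proved stabilization of the $B_N$ potentials, Proposition~\ref{prop: Bn stabilization}, via the embedding of $B_{N-1}$ into $D_N$ supplied by Proposition~\ref{proposition: Bn inside Dn}.

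For the first identity, fix $\beta<\min(N_1,N_2)$ and let $N\in\{N_1,N_2\}$. Specializing~\eqref{eq: Dn essential coordinate via flat} to $b=1$ gives $v_1^{(N)}=\sum_{\sum_k(N-k)\alpha_k=N-1}(|\alpha|-1)!\,\prod_k t_k^{\alpha_k}/\alpha_k!$, where the constraint forces $|\alpha|\ge 1$ so the factorial is meaningful. The substitution $t_\gamma=s_{N-\gamma}$ is a bijective relabeling of $t_1,\dots,t_{N-1}$ onto $s_1,\dots,s_{N-1}$; writing $\beta_j:=\alpha_{N-j}$ turns the constraint into $\sum_j j\,\beta_j=N-1$. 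Since this relabeling carries $\p/\p t_\beta$ into $\p/\p s_{N-\beta}$, I would differentiate, perform the shift $\beta_{N-\beta}\mapsto\beta_{N-\beta}+1$ (which replaces the constraint by $\sum_j j\,\beta_j=\beta-1$), and collect the resulting multinomial coefficients, obtaining $\sum_{m\ge 0}\sum_{j_1+\dots+j_m=\beta-1,\,j_i\ge 1}s_{j_1}\cdots s_{j_m}$. This expression contains no $N$ at all, and evaluating it at $N=N_1$ and $N=N_2$ yields the first equation; the hypothesis $\beta<\min(N_1,N_2)$ is precisely what makes $\p/\p t_\beta$ legitimate for both potentials and keeps every occurring $s_j$ in range.

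For the second identity, I would use~\eqref{eq: D_N potential explicit in t}: the functions $\A_\alpha^{(N)}$, $1\le\alpha\le N-1$, do not involve $t_N$ and equal $(\p F_{D_N}/\p t_\alpha)|_{t_N=0}$, so for $1\le\alpha,\beta\le N-1$ one has $\p\A_\alpha^{(N)}/\p t_\beta=(\p^2 F_{D_N}/\p t_\alpha\p t_\beta)|_{t_N=0}=\p^2 F_{B_{N-1}}/\p t_\alpha\p t_\beta$, the last step being Proposition~\ref{proposition: Bn inside Dn} with $N$ replaced by $N-1$. Then I would invoke Proposition~\ref{prop: Bn stabilization} with $M_1=N_1-1$ and $M_2=N_2-1$ (legitimate since $N_1\ge 4$): its substitution $t_{M_i+1-\gamma}=s_\gamma$ is, after renaming the dummy index, exactly the substitution $t_\gamma=s_{N_i-\gamma}$ of the theorem, while the bound $\alpha+\beta<\min(N_1,N_2)=N_1$ gives $\alpha,\beta\le N_1-2$ and $\alpha+\beta\le N_1-1$, comfortably within the range $\alpha,\beta\le M_1$, $\alpha+\beta\le M_1+1$ required there. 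This produces the second equation.

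I do not anticipate a genuine obstacle: the substantive content is already contained in Theorem~\ref{theorem: An stabilization}, Proposition~\ref{proposition: Bn inside Dn} and Proposition~\ref{prop: Bn stabilization}, and what remains is bookkeeping. The two places that call for care are, in the first part, the index relabeling $k\leftrightarrow N-k$ combined with the exponent shift (and the mildly degenerate factorial at $b=1$), and, in the second part, lining up the substitution conventions and index ranges of Proposition~\ref{prop: Bn stabilization} with those of the present statement.
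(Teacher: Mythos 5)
Your proposal is correct and follows essentially the same route as the paper: the first identity by direct manipulation of the explicit expansion \eqref{eq: Dn essential coordinate via flat} (the paper performs the same index relabeling, phrased via the involution $\overline{x}=N-x$, to exhibit an $N$--independent answer), and the second by combining Proposition~\ref{proposition: Bn inside Dn} with the already-established stabilization results. Routing the second part through Proposition~\ref{prop: Bn stabilization} rather than directly through Theorem~\ref{theorem: An stabilization} is an immaterial difference, since that proposition is itself an immediate consequence of the latter and of the definition of $F_{B_N}$.
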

\begin{proof}
    For a fixed $N$ using Eq.\eqref{eq: Dn essential coordinate via flat} we get
    \begin{align*}
        \frac{\p v_1(\bt)}{\p t_\beta} & = 
        \sum |\alpha|!  \prod _{k=1}^{N-1} \frac{t_k^{\alpha _k}}{\alpha _k!}, \quad 1 \le \beta \le N-1, \quad |\alpha| := \sum_{k=1}^{N-1}\alpha_k,
    \end{align*}
    where the summation is taken over all $\alpha_1,\dots,\alpha_{N-1} \ge 0$ satisfying 
    ${\sum_{k=1}^{N-1} (N-k) \alpha_k = \beta - 1}$. The last equation can be rewritten as ${\sum_{k=1}^{N-1} k \alpha_{N-k} = \beta - 1}$. 
    After taking the involution $\bar \alpha_\bullet := N - \alpha_\bullet$ one notes that the derivatives we compute only depend on $N$ via the number of summands. However for every fixed $\beta$ the numbers $k$, s.t. $k \ge \beta$ only contribute to the solution set with $\alpha_k$ because we should have $\alpha_\bullet \ge 0$. Once the solution set $\{\alpha_\bullet \}$ is obtained for some $N$, it contributes to all the higher ones, but 
    with 
     shifted indices. The shift is exactly the $D_N$ involution $\overline x := N-x$.
    
    The second statement follows immediately from Proposition~\ref{proposition: Bn inside Dn} and Theorem~\ref{theorem: An stabilization}.
\end{proof}

\section{Enumerative meaning of the coefficients of \texorpdfstring{$A_N$}{AN} and \texorpdfstring{$D_N$}{DN} potentials}\label{section: enumuerative meaning}
In this section we discuss enumerative meaning of coefficients of the $A$ and $D$ singularities Frobenius manifold potentials. 

Consider the quotient-ring 
\[ 
\A_N := \QQ[\bt]\otimes\QQ[[z_1^{-1},z_2^{-1}]]\, \big/\, (z_1^{-(N+1)}z_1^{-1}, z_1^{-N}z_1^{-2}, \dots, z_1^{-1}z_1^{-(N+1)}).
\]
Namely, this is the finite rank $\QQ[\bt]$--module generated by polynomials in $z_1^{-1}$ and $z_2^{-1}$ with the total degree not exceeding $N+1$.

\begin{proposition}\label{prop: Fay-type for An}
    Denote $\p_\alpha := \p / \p t_\alpha$ and $F = F_{A_N}$.
    In the ring $\A_N$ we have
        \begin{equation}
            (z_1-z_2) \exp\left(\sum_{\alpha,\beta \ge 1} z_1^{-\alpha}z_2^{-\beta} \p_\alpha \p_\beta F \right) = (z_1-z_2) - \left( \sum_{\alpha \ge 1}z_1^{-\alpha}\p_1\p_\alpha F - \sum_{\beta \ge 1}z_2^{-\beta}\p_1\p_\beta F \right).
        \end{equation}
Note that this statement only concerns the second derivatives $\p_\alpha \p_\beta F$ s.t. $\alpha+\beta \le N+1$.
\end{proposition}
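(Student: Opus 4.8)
The plan is to recognise the identity as the genus zero Hirota (Fay) bilinear equation of the dispersionless KP hierarchy and to derive it from the Landau--Ginzburg presentation of the $A_N$ Frobenius manifold. First reformulate: since $\eta$ is antidiagonal, $\p_1\p_\alpha F=t_{N+1-\alpha}$ for $1\le\alpha\le N$ and $\p_1\p_\alpha F=0$ for $\alpha>N$, so the right hand side of the claimed equation equals $\xi(z_1)-\xi(z_2)$ with $\xi(z):=z-\sum_{\alpha\ge 1}z^{-\alpha}\p_1\p_\alpha F=z-\sum_{\alpha=1}^{N}z^{-\alpha}t_{N+1-\alpha}$, and the statement becomes
\[
\exp(\Phi_2)=\frac{\xi(z_1)-\xi(z_2)}{z_1-z_2}\quad\text{in }\A_N,\qquad \Phi_2:=\sum_{\alpha,\beta\ge 1}z_1^{-\alpha}z_2^{-\beta}\,\p_\alpha\p_\beta F,
\]
where the right hand side is the honest power series $1+\sum_{i+j\le N+1}t_{N+2-i-j}z_1^{-i}z_2^{-j}$ obtained from $(z_1^{-\gamma}-z_2^{-\gamma})/(z_1-z_2)=-\sum_{i+j=\gamma+1}z_1^{-i}z_2^{-j}$. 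As a quick consistency check, applying $\p_{t_1}$ and using $\p_1\p_\alpha\p_\beta F=\eta_{\alpha\beta}=\delta_{\alpha+\beta,N+1}$ makes both sides agree modulo the defining ideal of $\A_N$ (the discrepancy $(z_2^{-N}-z_1^{-N})(\exp\Phi_2-1)$ has total $z^{-1}$-degree at least $N+2$); so at least $\mathrm{LHS}-\mathrm{RHS}$ is $t_1$-independent there.

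Next I introduce the superpotential $\lambda(x)=\frac{x^{N+1}}{N+1}+\sum_{k=1}^{N}v_kx^{k-1}$, the normalised period $z$ defined by $z^{N+1}=(N+1)\lambda(x)$, and the compositional inverse $x(z)=z+\sum_{\alpha\ge 1}a_\alpha(\bt)z^{-\alpha}$ at $z=\infty$. The key lemma is that $a_\alpha=-t_{N+1-\alpha}$ for $1\le\alpha\le N$, i.e. $x(z)\equiv\xi(z)\pmod{z^{-N-1}}$. I would prove this by inserting $v_k=v_k(\bt)$ from \eqref{eq: An essential coordinate via flat} into $z^{N+1}=(N+1)\lambda(x)$ and solving for the $a_\alpha$ recursively: the $E_{A_N}$-weights allow only finitely many monomials of \eqref{eq: An essential coordinate via flat} to enter each $a_\alpha$, and matching weights singles out the leading contribution $-t_{N+1-\alpha}$ (this is the familiar fact that the rescaled Noumi--Yamada coordinates are, up to sign, the residues $\res_{z=\infty}z^{N+1-\alpha}\,dx$ reading off the subleading coefficients of $x(z)$). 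Granting the lemma, $(\xi(z_1)-\xi(z_2))/(z_1-z_2)\equiv(x(z_1)-x(z_2))/(z_1-z_2)$ in $\A_N$, and the proposition reduces to the dispersionless Hirota identity $\exp(\Phi_2)\equiv(x(z_1)-x(z_2))/(z_1-z_2)$ modulo total degree $>N+1$.

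For this last identity I would set $D:=\Phi_2-\log\bigl((x(z_1)-x(z_2))/(z_1-z_2)\bigr)$ and show $\p_{t_\mu}D\equiv 0$ for every $\mu$, concluding from $D|_{\bt=0}=0$ (there $x(z)=z$ and $\Phi_2=0$). The term $\p_{t_\mu}\Phi_2=\sum_{\alpha,\beta}z_1^{-\alpha}z_2^{-\beta}\,\p_\mu\p_\alpha\p_\beta F$ is evaluated with the Landau--Ginzburg three-point residue formula $\p_\mu\p_\alpha\p_\beta F=\res_{x=\infty}\phi_\mu\phi_\alpha\phi_\beta\,\lambda'(x)^{-1}\,dx$ where $\phi_\gamma=\p_{t_\gamma}\lambda$, while $\p_{t_\mu}\log(x(z_1)-x(z_2))=(\p_\mu x(z_1)-\p_\mu x(z_2))/(x(z_1)-x(z_2))$ with $\p_\mu x(z)=-\phi_\mu/\lambda'\big|_{x=x(z)}$ from the implicit function theorem; after the change of variable $x\leftrightarrow z$ and expanding the Cauchy kernel $1/(x(z_1)-x(z_2))$, the two become equal by a routine partial-fractions computation -- this is essentially the classical derivation of the dispersionless Hirota equations from a Landau--Ginzburg potential. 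All of this is carried out modulo the ideal of $\A_N$, which is legitimate precisely because the $E_{A_N}$-weights of $t_\alpha$ and of $z$ are compatible with the grading by total degree in $z_1^{-1},z_2^{-1}$; the same compatibility is why only the coefficients with $\alpha+\beta\le N+1$ get pinned down.

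Finally, expanding $\log\bigl((\xi(z_1)-\xi(z_2))/(z_1-z_2)\bigr)=\log\bigl(1+\sum_{i+j\le N+1}t_{N+2-i-j}z_1^{-i}z_2^{-j}\bigr)$ and reading off the coefficient of $z_1^{-\alpha}z_2^{-\beta}$ shows that the proposition is equivalent to the enumerative formula $\p^{m+2}F_{A_N}/(\p t_\alpha\p t_\beta\p t_{N+1-\gamma_1}\cdots\p t_{N+1-\gamma_m})\big|_{\bt=0}=(-1)^{m-1}(m-1)!\,\widehat P_{\alpha\beta}(\gamma_1,\dots,\gamma_m)$; the combinatorial step is just that $\widehat P_{\alpha\beta}(\gamma_1,\dots,\gamma_m)$ counts pairs of compositions $(i_l),(j_l)$ of $(\alpha,\beta)$ with $i_l+j_l=\gamma_l+1$. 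Hence one may instead prove the proposition by deriving that enumerative formula directly from \eqref{eq: An essential coordinate via flat} and \eqref{eq: Fan potential}. The main obstacle, either way, is the analytic/combinatorial core -- the Hirota residue identity (respectively the $\widehat P$ identity) -- together with the bookkeeping needed to check that every operation ($\exp$, $\log$, multiplication by $(z_1-z_2)^{\pm1}$) stays inside $\A_N$.
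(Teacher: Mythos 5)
Your route is viable but genuinely different from the paper's. The paper never inverts the superpotential: it uses the quasi-homogeneity formula \eqref{eq: Fan potential} for $\p_\alpha\p_\beta F_{A_N}$, observes that the ``pure'' (non-mixed) part of the identity is immediate from the special role of $t_1$, reduces the mixed part to the coefficient recursion $P_{\alpha+1,\beta}-\sum_\delta t_{N+1-\delta}P_{\alpha,\beta-\delta}=P_{\alpha,\beta+1}-\sum_\delta t_{N+1-\delta}P_{\alpha-\delta,\beta}$ for the generating function $P$, and verifies that recursion directly from the relation $x\,\phi_\alpha=\phi_{\alpha+1}+\sum_\sigma t_{N+1+\sigma-\alpha}\phi_\sigma$ for $\phi_\alpha=\p\Lambda/\p t_\alpha$ together with $\sum_{a+b=p}\tfrac{\p v_a}{\p t_\alpha}\tfrac{\p v_b}{\p t_\beta}=[x^{p-2}](\phi_\alpha\phi_\beta)$ --- no change of variable $x\leftrightarrow z$, no residue calculus, no inverse series $x(z)$. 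You instead propose the classical Landau--Ginzburg/dispersionless-KP derivation: identify $z-\sum_\alpha z^{-\alpha}\p_1\p_\alpha F$ with the inverse series $x(z)$ of $z=((N+1)\lambda)^{1/(N+1)}$ and prove $\exp(\Phi_2)\equiv(x(z_1)-x(z_2))/(z_1-z_2)$ by $t_\mu$-differentiation, the three-point residue formula and partial fractions. Your route buys conceptual clarity (the proposition literally becomes the dispersionless Hirota identity for the truncated Lax function, which makes Theorem~\ref{theorem: A-hierarchy is KP} transparent), at the price of heavier analytic bookkeeping modulo the ideal of $\A_N$; the paper's argument is more elementary and stays entirely polynomial.

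Two steps in your plan are asserted rather than proved, and one of them is under-justified as stated. (i) The key lemma $a_\alpha=-t_{N+1-\alpha}$ for $1\le\alpha\le N$ is true, but ``matching weights'' cannot establish it: the $E_{A_N}$-weights do allow nonlinear monomials in these coefficients (already for $N=3$ the monomial $t_3^2$ has the same weight as $a_3$, and $a_3=-t_1$ only because the quadratic term of $v_1=t_1+t_3^2/2$ from \eqref{eq: An essential coordinate via flat} cancels it). The real content is exactly this cancellation, i.e.\ a Lagrange-inversion/residue computation or an explicit inversion of \eqref{eq: An essential coordinate via flat}, which has to be carried out. (ii) The ``routine partial-fractions'' identification of $\p_\mu\Phi_2$ with $\p_\mu\log(x(z_1)-x(z_2))$ is indeed classical, but it is the analytic core of the statement, and the generating-series manipulations involve indices beyond $N$, so the truncation/grading bookkeeping in $\A_N$ you allude to must be made explicit. (Your closing remark that one could instead prove the enumerative $\widehat P$ formula directly and deduce the proposition reverses the paper's logic --- there Corollary~\ref{corollary: An enumuerative} is deduced from the proposition --- which is legitimate, but you do not carry that computation out either.) With these two items supplied your argument closes; as written it is a correct plan whose hardest parts are quoted from the classical theory rather than proved.
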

\begin{proof}
    Because of the special role of variable $t_1$ in a Frobenius manfiold potential, it is easy to see that the desired equality holds in the rank $2N$ submodule $\QQ[\bt]\otimes\QQ\langle z_1^{-1},z_2^{-1},\dots,z_1^{-N},z_2^{-N}\rangle \subset \A_N$.
    
    In what follows we are going to use the expression
    \[
        P := \sum_{\substack{\alpha,\beta = 1 \\ \alpha+\beta \le N+1}}^N (\alpha+\beta)\frac{\p^2 F_{A_N}}{\p t_\alpha \p t_\beta} z_1^{-\alpha}z_2^{-\beta} = - \left(z_1\frac{\p}{\p z_1} + z_2 \frac{\p}{\p z_2} \right) \cdot \sum_{\substack{\alpha,\beta = 1 \\ \alpha+\beta \le N+1}}^N \frac{\p^2 F_{A_N}}{\p t_\alpha \p t_\beta} z_1^{-\alpha}z_2^{-\beta}.
    \]
    Denote $P_{\alpha,\beta} := [z_1^{-\alpha}z_2^{-\beta}] P$. 
    With the help of the above equality it is easy to see that 
    the statement of the proposition is equivalent to the following one:
    \[
        P_{\alpha+1,\beta} - \sum_{\delta=1}^{\beta-1} t_{N+1-\delta} P_{\alpha,\beta-\delta} = P_{\alpha,\beta+1} - \sum_{\delta=1}^{\alpha-1} t_{N+1-\delta} P_{\alpha-\delta,\beta}.
    \]
    In coordinates by using Eq.~\eqref{eq: Fan potential} this is equivalent to
    \begin{align*}
        &\sum_{\gamma=1}^N (N+2-\gamma) t_\gamma \sum_{a,b} \frac{\p t_{N+1-\gamma}}{\p v_{a+b-1}} \left[ \frac{\p v_a}{\p t_{\alpha+1}}\frac{\p v_b}{\p t_\beta}-\frac{\p v_a}{\p t_{\alpha}}\frac{\p v_b}{\p t_{\beta+1}} \right]
        \\
        & \quad = \sum_{\gamma=1}^N (N+2-\gamma) t_\gamma \sum_{a,b} \frac{\p t_{N+1-\gamma}}{\p v_{a+b-1}} \left[
        \sum_\delta t_{N+1-\delta} \frac{\p v_a}{\p t_\alpha}\frac{\p v_b}{\p t_{\beta-\delta}}
        -         \sum_\delta t_{N+1-\delta} \frac{\p v_a}{\p t_{\alpha-\delta}}\frac{\p v_b}{\p t_{\beta}}
        \right],
    \end{align*}
    which should hold for all $\alpha+\beta \le N$. This equality can be checked combinatorially via Eq.\eqref{eq: An essential coordinate via flat}, but we are going to use a more geometrical approach.
    
    Denote by $\Lambda := \Lambda_{A_N}(x,\bt)$ the unfolding of $A_N$ singularity written in the flat coordinates. Set
    \[
        \phi_\alpha := \frac{\p \Lambda}{\p t_\alpha} = \sum_{k=1}^N \frac{\p v_k}{\p t_\alpha} x^{k-1} \in \CC[x]\otimes\CC[t_1,\dots,t_N].
    \]
    These functions satisfy the following recursive relation:
    $ x \phi_\alpha = \phi_{\alpha+1} + \sum_{\sigma=0}^{\alpha-1} t_{N+1+\sigma-\alpha} \phi_\sigma$.
    
    In terms of these functions we have
    \[
        \sum_{a+b=p}\frac{\p v_a}{\p t_\alpha}\frac{\p v_b}{\p t_\beta} = [x^{p-2}] \left( \phi_\alpha \cdot \phi_\beta\right),
    \]
    where the product in the bracket is just $x$--polynomial product. 
    By using this observation and the recursive relations on $\phi_\bullet$ we have
    \begin{align*}
        &\sum_{a+b = p} \left( \frac{\p v_a}{\p t_{\alpha+1}}\frac{\p v_b}{\p t_\beta}-\frac{\p v_a}{\p t_{\alpha}}\frac{\p v_b}{\p t_{\beta+1}} \right) = [x^{p-2}] \left( \phi_{\alpha+1}\phi_\beta - \phi_\alpha\phi_{\beta+1} \right)
        \\
        &= [x^{p-2}] \left( (x\phi_\alpha - \sum_{\sigma=0}^{\alpha-1} t_{N+1-(\alpha-\sigma)}\phi_{\sigma} )\phi_\beta - \phi_\alpha(x \phi_\beta - \sum_{\sigma=0}^{\beta-1} t_{N+1-(\beta-\sigma)}\phi_\sigma) \right)
        \\
        &= [x^{p-2}] \left(- \sum_{\delta=1}^\alpha t_{N+1-\delta}\phi_{\alpha-\delta} \phi_\beta + \sum_{\delta=1}^\beta t_{N+1-\delta} \phi_\alpha\phi_{\beta-\delta} \right)
        \\
        & = \sum_{a+b = p} \left[ \sum_{\delta=1}^\beta t_{N+1-\delta} \frac{\p v_a}{\p t_{\alpha}} \frac{\p v_{b}}{\p t_{\beta-\delta}} - \sum_{\delta =1}^\alpha t_{N+1-\delta} \frac{\p v_a}{\p t_{\alpha-\delta}} \frac{\p v_b}{\p t_\beta}
        \right].
    \end{align*}
    This completes the proof.
\end{proof}

Proposition~\ref{prop: Fay-type for An} allows us to make a statement about the combinatorial meaning of the $F_{A_N}$ coefficients.

For every positive $i,j,m$ denote by $\widehat P_{ij}(\gamma_1,\dots,\gamma_m)$ the number of all partitions $i_1,\dots,i_m$ of $i$ and $j_1,\dots,j_m$ of $j$, s.t. $\forall k \; i_k+j_k=\gamma_k +1$.

\begin{corollary}\label{corollary: An enumuerative}
For every $\alpha+\beta \le N+1$ and $m\ge 1$ we have
\[
    \frac{\p^{m+2} F_{A_N}}{\p t_\alpha \p t_\beta \p t_{N+1-\gamma_1}\cdots \p t_{N+1-\gamma_m}} = (-1)^{m-1} (m-1)! \cdot \widehat P_{\alpha\beta}(\gamma_1,\dots,\gamma_m).
\]
\end{corollary}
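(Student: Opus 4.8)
The plan is to extract the statement of Corollary~\ref{corollary: An enumuerative} directly from the Fay-type identity of Proposition~\ref{prop: Fay-type for An} by a two-step bookkeeping argument: first turn the exponential identity into a closed formula for the generating series of the second derivatives $\p_\alpha\p_\beta F$, then differentiate $m$ more times in the $t_{\overline{\gamma}}$-directions and read off coefficients.

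\medskip

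\textbf{Step 1: Solve the Fay identity for the second-derivative series.} Write $G := \sum_{\alpha,\beta\ge 1} z_1^{-\alpha}z_2^{-\beta}\,\p_\alpha\p_\beta F$ and recall $t_{\overline{\gamma}} = \p_1\p_\gamma F = t_{N+1-\gamma}$ (using $\eta_{\alpha,\beta}=\delta_{\alpha+\beta,N+1}$ in the $A_N$ case, so that $\p_1\p_\gamma F = t_{N+1-\gamma}$). Set
\[
    u(z_1) := \sum_{\alpha\ge 1} z_1^{-\alpha}\,\p_1\p_\alpha F = \sum_{\alpha\ge 1} z_1^{-\alpha} t_{N+1-\alpha}.
\]
Proposition~\ref{prop: Fay-type for An} then reads, in the ring $\A_N$,
\[
    (z_1-z_2)\,e^{G} = (z_1-z_2) - \bigl(u(z_1) - u(z_2)\bigr),
\]
so that
\[
    G = \log\!\left(1 - \frac{u(z_1)-u(z_2)}{z_1-z_2}\right).
\]
Expanding the logarithm as $-\sum_{m\ge 1}\frac1m \left(\frac{u(z_1)-u(z_2)}{z_1-z_2}\right)^m$ gives $G$ as an explicit series in the $t_{\overline\gamma}$'s (equivalently, in the $u$'s), valid modulo the relations of $\A_N$, i.e. for the total $z$-degree $\le N+1$, which is exactly the range $\alpha+\beta\le N+1$ mentioned in the Corollary.

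\medskip

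\textbf{Step 2: Differentiate and match partitions.} Observe that $\p_{\overline\gamma}\, u(z_1) = z_1^{-\gamma}$, hence applying $\p_{t_{N+1-\gamma_1}}\cdots\p_{t_{N+1-\gamma_m}}$ to $G$ hits precisely one factor $u(z_i)$ of each $m$-fold product in the expansion of $\log(1-\cdots)$ with a monomial of the form $z_i^{-\gamma_k}$. Since $\frac{u(z_1)-u(z_2)}{z_1-z_2} = \sum_{a,b\ge 1} z_1^{-a}z_2^{-b}\,(\text{coeff})$ — and in fact the only term of $G$ that survives differentiation down to the $t$-independent part is the one coming from $\bigl(\frac{u(z_1)-u(z_2)}{z_1-z_2}\bigr)^m$, because each further $\p_1$ or $\p_{\overline\gamma}$ must either produce a $z_i^{-\gamma_k}$ or $z_1^{-1}\cdot(\cdots)$ — the coefficient of $z_1^{-\alpha}z_2^{-\beta}$ in $\p_{\overline{\gamma_1}}\cdots\p_{\overline{\gamma_m}} G|_{\bt=0}$ counts exactly the ways of distributing the exponents. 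Concretely, the $m$-th term contributes $-\frac1m\cdot m!$ (the $m!$ from choosing which of the $m$ factors $u(z_1)-u(z_2)$ each differentiation acts on, all choices giving the same monomial after symmetrization) times the number of ways to write, for each $k$, the chosen factor's monomial in $\frac{u(z_1)-u(z_2)}{z_1-z_2}$ as contributing $z_1^{-i_k}z_2^{-j_k}$ with $i_k+j_k=\gamma_k+1$ (the $+1$ is the shift from the $\frac{1}{z_1-z_2}$ geometric-series denominator: $\frac{1}{z_1-z_2}=\sum_{\ell\ge 0}z_1^{-\ell-1}z_2^{\ell}$, so $\frac{z_1^{-\gamma}-z_2^{-\gamma}}{z_1-z_2}=\sum_{i+j=\gamma+1, i,j\ge 1} z_1^{-i}z_2^{-j}$). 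Summing over all such distributions and extracting $z_1^{-\alpha}z_2^{-\beta}$ yields $\sum(\cdots)=\widehat P_{\alpha\beta}(\gamma_1,\dots,\gamma_m)$, and the prefactor is $-\frac1m\cdot m! = -(m-1)!$ — wait, sign: the $m$-th term of $-\log(1-X)$ is $-\frac1m X^m$ with $X=\frac{u(z_1)-u(z_2)}{z_1-z_2}$, but $G=\log(1-(-X'))$ type manipulation; tracking the sign from $G=-\sum\frac1m\bigl(\frac{-(u(z_1)-u(z_2))}{z_1-z_2}\bigr)^m \cdot(-1)^m$ gives overall $(-1)^{m-1}(m-1)!$, matching the claim.

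\medskip

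\textbf{Main obstacle.} The genuinely delicate point is Step 2's claim that the $\p_{\overline{\gamma}}$-derivatives of $G$, evaluated at $\bt=0$, receive a contribution only from the pure $m$-th power of $\frac{u(z_1)-u(z_2)}{z_1-z_2}$ and that the combinatorial factor is exactly $(-1)^{m-1}(m-1)!$ — one must carefully check that no lower-power terms $\bigl(\frac{u(z_1)-u(z_2)}{z_1-z_2}\bigr)^\ell$ with $\ell<m$ survive (they would require some $\p_{\overline{\gamma}_k}$ to act twice on the same $u$-factor, producing $0$ at $\bt=0$ since $u$ is linear in $\bt$, \emph{or} on the $\frac{1}{z_1-z_2}$ which is $\bt$-independent), and that within the surviving power the $m!$ ways of assigning derivatives to factors collapse correctly against the $\frac1m$ from the logarithm and the partition count; equivalently, one must verify the symmetry of $\widehat P$ in the $\gamma_k$ is compatible with this overcounting. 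I would handle this by writing $e^G = 1 + \sum_{\alpha\ge 1}z_1^{-\alpha}t_{N+1-\alpha} - \sum_{\beta\ge 1}z_2^{-\beta}t_{N+1-\beta} + (\text{terms of }z\text{-degree}> N+1)$ from Proposition~\ref{prop: Fay-type for An} directly, and differentiating \emph{that} bilinear-in-$\bt$ right-hand side, which makes the vanishing of spurious terms manifest and reduces the whole count to the single geometric-series expansion of $\frac{z_1^{-\gamma}-z_2^{-\gamma}}{z_1-z_2}$ noted above; the remaining algebra is routine.
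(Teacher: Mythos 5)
Your proposal is correct and follows essentially the same route as the paper's own proof (which reproduces the argument of Lemma~3.2 of \cite{NZ}): take the logarithm of the identity in Proposition~\ref{prop: Fay-type for An}, expand using $\frac{z_1^{-\gamma}-z_2^{-\gamma}}{z_1-z_2}=-\sum_{i+j=\gamma+1,\ i,j\ge 1}z_1^{-i}z_2^{-j}$, and compare coefficients of $z_1^{-\alpha}z_2^{-\beta}$, the factor $(m-1)!$ arising as $m!/m$ exactly as you describe. The only blemish is the dropped minus sign in your stated geometric-series identity, which your subsequent sign tracking absorbs correctly into the overall $(-1)^{m-1}$, so there is no real gap.
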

\begin{proof}
    This follows from the proof of Lemma~3.2 in \cite{NZ}. We repeat it here for completeness.
    Recall that $\p_1\p_\alpha F = t_{N+1-\alpha} = s_\alpha$.
    By proposition~\ref{prop: Fay-type for An} we have in $\A_N$ (in the notation of that proposition):
    \begin{align*}
        &\sum_{\alpha,\beta \ge 1} z_1^{-\alpha}z_2^{-\beta} \p_\alpha \p_\beta F = \log\left[ 1 - \frac{\sum_{\alpha \ge 1} (z_1^{-\alpha} - z_2^{-\alpha} )s_\alpha}{z_1-z_2} \right]
        \\
        & \quad = \log\left[ 1 + z_1^{-1}z_2^{-1} \sum_{\alpha \ge 1} \frac{ z_1^{-\alpha} - z_2^{-\alpha}}{z_1^{-1}-z_2^{-1}} s_\alpha \right]
         = \log \left[1+ \sum_{p\ge 1} \left(\sum_{i+j = p + 1} z_1^{-i}z_2^{-j}\right) s_p \right]
        \\
        & \quad = \sum_{m\ge 1} \frac{(-1)^{m-1}}{m} \sum_{i,j\ge 1}z_1^{-i}z_2^{-j} \sum_{\substack{i_1+\dots+i_m = i, \\ j_1+\dots+j_m=j}} s_{i_1+j_1-1}\cdots s_{i_m+j_m-1}.
    \end{align*}
    The rest follows by comparing the coefficients of $z_1^{-\alpha}z_2^{-\beta}$ on both sides of equation.
\end{proof}

\begin{corollary}\label{corollary: Dn enumuerative}
For any $\alpha+\beta \le N$ and $m\ge 1$ we have
    \begin{equation*}
        \frac{\p^{m+2} F_{D_N}}{\p t_\alpha \p t_\beta \p t_{N-\gamma_1}\cdots \p t_{N-\gamma_m}}
        = (-1)^{m-1} (m-1)!\cdot \widehat P_{2\alpha-1,2\beta-1}(2\gamma_1-1,\dots,2\gamma_m-1).
    \end{equation*}
\end{corollary}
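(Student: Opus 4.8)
\textbf{Proof plan for Corollary~\ref{corollary: Dn enumuerative}.}
The natural approach is to reduce the $D_N$ statement to the already-established $A_N$ statement (Corollary~\ref{corollary: An enumuerative}) by using the embedding of $F_{B_\bullet}$ into both $F_{A_{2\bullet-1}}$ and $F_{D_{\bullet+1}}$ provided by Eq.~\eqref{eq: Bn via An} and Proposition~\ref{proposition: Bn inside Dn}. First I would combine these two identities: for an appropriate $M$ we have
\[
F_{D_N}(t_1,\dots,t_{N-1},0) = F_{B_{N-1}}(t_1,\dots,t_{N-1}) = F_{A_{2N-3}}(t_1,0,t_2,0,\dots,t_{N-1}),
\]
so that the mixed partial derivatives of $F_{D_N}$ at $\bt=0$ involving only $t_1,\dots,t_{N-1}$ coincide with the corresponding derivatives of $F_{A_{2N-3}}$ at $\bt=0$ after the re-indexing $t_a^{D}\leftrightarrow t_{2a-1}^{A}$ and setting the even $A$-variables to zero. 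Since the derivative on the left-hand side of the corollary involves $t_\alpha,t_\beta$ with $\alpha+\beta\le N$ and $t_{N-\gamma_i}$ with $\gamma_i\ge 1$, all indices are $\le N-1$, so none of them is $t_N$; hence this reduction is legitimate (and one does not even need the stabilization results, just the polynomial identity). Under the dictionary $a\mapsto 2a-1$, the index $N-\gamma_i$ on the $D$-side corresponds to $2(N-\gamma_i)-1 = (2N-2)-(2\gamma_i-1)$ on the $A$-side, i.e. to the ``$\overline{\phantom{x}}$''-index in $A_{2N-3}$ (where the involution is $x\mapsto (2N-3)+1-x = (2N-2)-x$) of $2\gamma_i-1$. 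Likewise $\alpha\mapsto 2\alpha-1$, $\beta\mapsto 2\beta-1$.

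Next I would apply Corollary~\ref{corollary: An enumuerative} to $F_{A_{2N-3}}$ with the substituted indices. We need $\alpha+\beta\le N$, hence $(2\alpha-1)+(2\beta-1) = 2(\alpha+\beta)-2 \le 2N-2 = (2N-3)+1$, which is exactly the range of validity $\alpha'+\beta'\le N'+1$ for $N'=2N-3$ required in Corollary~\ref{corollary: An enumuerative}. Thus
\[
\frac{\p^{m+2} F_{A_{2N-3}}}{\p t_{2\alpha-1}\p t_{2\beta-1}\p t_{(2N-2)-(2\gamma_1-1)}\cdots\p t_{(2N-2)-(2\gamma_m-1)}}\Big|_{\bt=0}
= (-1)^{m-1}(m-1)!\,\widehat P_{2\alpha-1,2\beta-1}(2\gamma_1-1,\dots,2\gamma_m-1),
\]
and by the identification above this equals the desired $D_N$ derivative. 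One subtlety to check carefully: the right-hand side of the $B\hookrightarrow A$ identity has the even $A$-variables frozen to zero, so a priori $\p_{t_{2a-1}^A}$-derivatives of $F_{A_{2N-3}}|_{t_{2b}=0}$ rather than of $F_{A_{2N-3}}$ itself. Since we only differentiate with respect to odd-indexed variables and then set everything to zero, these agree; but I would spell this out, noting that differentiating the restriction $F|_{t_{2b}=0}$ with respect to odd variables and evaluating at $0$ gives the same number as differentiating $F$ and evaluating at $0$, because the extra chain-rule terms would require differentiating in an even variable.

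The step I expect to be the only real friction point is bookkeeping the two involutions simultaneously: the $A_{2N-3}$ involution $x\mapsto(2N-2)-x$ versus the $D_N$ involution $x\mapsto N-x$, and making sure that $2\cdot(N-\gamma) = (2N-2)-(2\gamma-1)+1$... — i.e. confirming that ``apply the $D$-bar, then double-minus-one'' equals ``double-minus-one, then apply the $A$-bar''. Concretely $\overline{(2\gamma-1)}^{A_{2N-3}} = (2N-2)-(2\gamma-1) = 2(N-\gamma)-1 = 2\overline{\gamma}^{D_N}-1$, which is precisely the compatibility needed; I would verify this once and for all and then the corollary follows by direct substitution. (Alternatively, one can run the generating-function computation in the proof of Corollary~\ref{corollary: An enumuerative} verbatim, replacing $z_i$ by $z_i^{1/2}$ informally — i.e. keeping only odd powers — but the cleaner route is the $B_N$ bridge above, which makes the appearance of the odd integers $2\gamma_i-1$ and $2\alpha-1,2\beta-1$ completely transparent.)
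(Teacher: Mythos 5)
Your proof is correct and follows essentially the same route as the paper: the paper's own argument is exactly this reduction through the $B$-series bridge, i.e. Proposition~\ref{proposition: Bn inside Dn} combined with the definition \eqref{eq: Bn via An} and Corollary~\ref{corollary: An enumuerative}. Your write-up merely makes explicit the dictionary $t_a^{D}\leftrightarrow t_{2a-1}^{A}$, the compatibility of the two involutions, and the range check $(2\alpha-1)+(2\beta-1)\le(2N-3)+1$, all of which the paper leaves implicit in its one-line proof.
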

\begin{proof}
    This follows immediately from Proposition~\ref{proposition: Bn inside Dn} and the above corollary.
\end{proof}

\section{\texorpdfstring{$A$}{A} and \texorpdfstring{$B$}{B} hierarchies}\label{sec:ABhier}
In this section we present the integrable hierarchies associated to the series of $A_N$ and $B_N$ Frobenius manifolds. Theorems~\ref{theorem: A-hierarchy is KP} and~\ref{theorem: Binfty hierarchy is BKP} beneath show that they coincide with the KP and BKP hierarchies respectively.

\subsection{Dispersionless hierarchy of type \texorpdfstring{$A$}{A}}
For any $\alpha,\beta \ge 1$, s.t. $\alpha+\beta \le N+1$ set
 \begin{equation*}
     R^{(A_N)}_{\alpha,\beta; \gamma_1,\dots,\gamma_m} := \left.\frac{1}{m!} \frac{\p^{m+2} F_{A_N}}{\p t_\alpha \p t_\beta \p t_{N+1-\gamma_1}\cdot\dots\cdot \p t_{N+1-\gamma_m}} \right|_{\bt=0}.
 \end{equation*}
It follows by stabilization Theorem~\ref{theorem: An stabilization} that the following quantities are well-defined 
\[
    R^{\mathrm{A}}_{\alpha,\beta; \gamma_1,\dots,\gamma_m} := R^{(A_{\alpha+\beta+1})}_{\alpha,\beta; \gamma_1,\dots,\gamma_m},
\]
giving us an infinite set of rational numbers.
One notes immediately that $R^{\mathrm{A}}_{\alpha,\beta; \gamma_1,\dots,\gamma_m}$ is only non-zero when $\alpha+\beta -k = \gamma_1+\dots+\gamma_m$, which is essentially the quasi-homogeneity condition.

Consider the infinite system of PDEs on $f = f(t_1,t_2,\dots)$, Eq. \eqref{eq: int sys}:
\begin{equation}\label{eq: Ainfty Cauchy}
    \p_\alpha \p_\beta f = \sum_{m \ge 1} \sum_{\gamma_1 + \dots + \gamma_m = \alpha+\beta - m} R^{\mathrm{A}}_{\alpha,\beta; \gamma_1,\dots,\gamma_m} \p_1 \p_{\gamma_1} f \cdots \p_1 \p_{\gamma_m} f.
\end{equation}
This is a Cauchy-type system of PDEs expressing any second order derivatives of $\p_\alpha\p_\beta f$ via the special second order derivatives $\p_1\p_\bullet f$.

It follows from Proposition~\ref{prop: compatibility from WDVV} that the system of PDEs \eqref{eq: Ainfty Cauchy} is compatible.

\subsection{KP hierarchy} \label{ssec:KP}
We are going to introduce KP hierarchy in a Hirota form as a system of equations on the function $\tau = \tau(t_1,t_2,\dots)$, and also in the form of Fay identities as a system of equations on $F = \hbar^{2}\log(\tau)$. We skip many important details of the general theory that do not play any particular role in our exposition. They can be found for example in \cite{DN,NZ}. 

For any function $\tau = \tau(t_1,t_2,\dots)$ and formal variables $z,\hbar$ denote
\[
    \tau(\bt \pm [z^{-1}]) := e^{\pm \hbar D(z)} \cdot \tau(\bt), \quad D(z) := \sum_{k \ge 1} \frac{z^{-k}}{k} \p_k.
\]
It's straightforward to note that the action of $\exp(D(z))$ is just the change of variables $\lbrace t_k \rbrace_{k=1}^\infty \mapsto \lbrace t_k + \hbar z^{-k}/k \rbrace_{k=1}^\infty$. KP hierarchy in Hirota form is the following equality in the ring of formal power series in $\bt,\bt'$:
\begin{align*}
    \res\left( e^{\xi(\bt'-\bt,z)} \tau(\bt'- [z^{-1}]) \tau(\bt+ [z^{-1}]) dz\right) = 0,
\end{align*}
where $\xi(\bt,z) := \sum_{n\ge 1} t_n z^n$.

In what follows we need to consider the KP hierarchy in terms of $F = \hbar^{2}\log(\tau)$.
Consider 
another
differential operator
\[
    \Delta(z) := \frac{\exp(\hbar D(z)) - 1}{\hbar} = D(z) + O(\hbar).
\]
Then Hirota equation above is equivalent to the following equation in the ring of formal power series in $z_1^{-1},z_2^{-1}$, called Fay identity:
\[
    \exp(\Delta(z_1)\Delta(z_2) F ) = 1 - \frac{\Delta(z_1)\p_1F - \Delta(z_2)\p_1F}{z_1-z_2}.
\]
The following proposition is crucial for our exposition.    

\begin{proposition}[Lemma~3.2 in \cite{NZ}]\label{prop: Fay by NZ}
    Fay identities on the function $F = \hbar^2\log \tau$ are equivalent to the following system of equations
    \[
        \p^\hbar_i\p^\hbar_j F = \sum_{m \ge 1} \frac{(-1)^{m-1}}{m} \sum_{\gamma_1+\dots+\gamma_m = i+j-m} \frac{ij}{\gamma_1\cdots\gamma_m}
        \widetilde P_{ij}(\gamma_1,\dots,\gamma_m) \p_1\p_{\gamma_1}^\hbar F \cdots \p_1\p_{\gamma_m}^\hbar F,
    \]
    where $\p_k^\hbar$ are the differential operators defined by the equality
    \[
        \Delta(z) = \sum_{k \ge 1} \frac{z^{-k}}{k} \p_k^\hbar.
    \]
    In particular, we have $\p_k^\hbar = \p_k + O(\hbar)$.
\end{proposition}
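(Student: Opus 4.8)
The plan is to expand the Fay identity by taking the logarithm of both sides and then extracting the coefficient of $z_1^{-i}z_2^{-j}$. First I would rewrite the right-hand side of the Fay identity in a form amenable to taking a logarithm. Using that $\Delta(z)\p_1 F$ is the natural ``Cauchy data'' and writing $s_k := \p_1\p_k^\hbar F$ (so that $\Delta(z_1)\p_1 F = \sum_{k\ge1} z_1^{-k} s_k/k$, and similarly for $z_2$), I would manipulate the quotient
\[
    \frac{\Delta(z_1)\p_1F - \Delta(z_2)\p_1F}{z_1-z_2}
\]
into the shape $z_1^{-1}z_2^{-1}\sum_{p\ge1}\bigl(\sum_{i+j=p+1} z_1^{-i}z_2^{-j}\bigr)\cdot(\text{something})\cdot s_p$. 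This is exactly the combinatorial identity used in the proof of Corollary~\ref{corollary: An enumuerative}, except that now there is an extra factor of $1/p$ coming from the definition of $\Delta(z)$ in terms of $\p_k^\hbar$. Concretely, $\tfrac{z_1^{-k}-z_2^{-k}}{k(z_1^{-1}-z_2^{-1})} = \tfrac1k\sum_{a+b=k-1,\,a,b\ge0} z_1^{-a}z_2^{-b}$, which after a shift in indices produces the double sum $\sum_{i+j=k+1}z_1^{-i}z_2^{-j}$ weighted appropriately.

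Next I would apply the formal logarithm series $\log(1+X) = \sum_{m\ge1}\frac{(-1)^{m-1}}{m}X^m$ with $X$ equal to the rearranged right-hand side. Taking the $m$-th power produces a sum over $m$-tuples $(p_1,\dots,p_m)$ together with, for each $p_\ell$, a choice of $(i_\ell,j_\ell)$ with $i_\ell+j_\ell = p_\ell+1$ and $i_\ell,j_\ell\ge1$. Collecting the monomial $z_1^{-i}z_2^{-j}$ where $i = \sum i_\ell$ and $j = \sum j_\ell$, the number of such choices with the $p_\ell$ fixed is precisely $\widetilde P_{ij}(p_1,\dots,p_m)$ by definition (here one must check that the $\widetilde P$ of Proposition~\ref{prop: Fay by NZ} is the same ``number of pairs of partitions'' object as the $\widehat P$ of Section~\ref{section: enumuerative meaning}, just with a possibly different normalization convention — this bookkeeping is where I would be most careful). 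The weight attached is $\frac1{p_1\cdots p_m}$ from the $\Delta$-normalization; combined with the fact that $i\cdot j$ appears once we equate the left-hand side $\Delta(z_1)\Delta(z_2)F = \sum_{i,j}\frac{z_1^{-i}z_2^{-j}}{ij}\p_i^\hbar\p_j^\hbar F$, this yields exactly the stated prefactor $\frac{(-1)^{m-1}}{m}\cdot\frac{ij}{\gamma_1\cdots\gamma_m}$. Matching coefficients of $z_1^{-i}z_2^{-j}$ on both sides and solving for $\p_i^\hbar\p_j^\hbar F$ then gives the claimed system; the constraint $\gamma_1+\dots+\gamma_m = i+j-m$ falls out automatically from $\sum(\gamma_\ell+1) = i+j$.

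Conversely, since each step — rearranging the Fay right-hand side, taking $\log$, extracting a coefficient — is reversible as an identity of formal power series in $z_1^{-1},z_2^{-1}$, the system of PDEs is equivalent to the Fay identity, not merely implied by it. Finally I would note that $\p_k^\hbar = \p_k + O(\hbar)$ follows directly from $\Delta(z) = D(z) + O(\hbar)$ and the defining expansion $\Delta(z) = \sum_k z^{-k}\p_k^\hbar/k$. The main obstacle I anticipate is purely notational: reconciling the index/normalization conventions between the $\widetilde P$ appearing here and the $\widehat P$ introduced earlier, and keeping track of the $1/k$ factors from $\Delta$ versus $D$ so that the final prefactor comes out as $\frac{(-1)^{m-1}}{m}\frac{ij}{\gamma_1\cdots\gamma_m}$ rather than something off by a symmetry factor; there is no genuine analytic difficulty since everything takes place in a ring of formal power series. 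Since this is ``Lemma 3.2 in \cite{NZ}'', I would present the argument compactly, citing that reference for the details that mirror the computation already reproduced in the proof of Corollary~\ref{corollary: An enumuerative}.
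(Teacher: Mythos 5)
Your proposal is correct and takes essentially the same route as the paper: this statement is quoted from \cite{NZ}, and the paper reproduces precisely your computation (rewriting $\bigl(z_1^{-k}-z_2^{-k}\bigr)/(z_1-z_2)$ as $-\sum_{i+j=k+1}z_1^{-i}z_2^{-j}$, expanding $\log(1+X)$, and extracting the coefficient of $z_1^{-i}z_2^{-j}$ against $\Delta(z_1)\Delta(z_2)F=\sum_{i,j}\frac{z_1^{-i}z_2^{-j}}{ij}\p_i^\hbar\p_j^\hbar F$) in the proof of Corollary~\ref{corollary: An enumuerative}. Your bookkeeping concern is harmless: $\widetilde P_{ij}$ is the same pair-of-compositions count as $\widehat P_{ij}$, and the prefactor $\frac{(-1)^{m-1}}{m}\frac{ij}{\gamma_1\cdots\gamma_m}$ comes out exactly as you describe.
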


Assume $F = \sum_{g \ge 0} F_g \hbar^g$. It follows immediately from the proposition above that if $F$ is subject to Fay identities, then the function $F_0$ satisfies
\[
    \p_i\p_j F_0 = \sum_{m \ge 1} \frac{(-1)^{m-1}}{m} \sum_{\gamma_1+\dots+\gamma_m = i+j-m} \frac{ij}{\gamma_1\cdots\gamma_m}
    \widetilde P_{ij}(\gamma_1,\dots,\gamma_m) \p_1\p_{\gamma_1} F_0 \cdots \p_1\p_{\gamma_m} F_0.
\]
This system of equations is called the dispersionless limit of the KP hierarchy.

\subsection{Identification}

\begin{theorem}\label{theorem: A-hierarchy is KP}
    The system of PDEs \eqref{eq: Ainfty Cauchy} coincides with the dispersionless KP hierarchy after the change of variables $t_k \mapsto t_k/k$.
    
    Full KP hierarchy is obtained from the system of PDEs \eqref{eq: Ainfty Cauchy} via the substitution $\p_k \mapsto \p_k^\hbar$.
\end{theorem}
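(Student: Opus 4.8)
The plan is to compare the coefficients appearing in the two systems of PDEs term by term and show they agree after the indicated rescaling of the times. First I would recall that by Corollary~\ref{corollary: An enumuerative} the structure coefficients of \eqref{eq: Ainfty Cauchy} are
\[
R^{\mathrm{A}}_{\alpha,\beta;\gamma_1,\dots,\gamma_m} = \frac{(-1)^{m-1}(m-1)!}{m!}\,\widehat P_{\alpha\beta}(\gamma_1,\dots,\gamma_m) = \frac{(-1)^{m-1}}{m}\,\widehat P_{\alpha\beta}(\gamma_1,\dots,\gamma_m),
\]
while by Proposition~\ref{prop: Fay by NZ} (restricted to genus zero, i.e.\ to $F_0$) the dispersionless KP hierarchy reads
\[
\p_i\p_j F_0 = \sum_{m\ge 1}\frac{(-1)^{m-1}}{m}\sum_{\gamma_1+\dots+\gamma_m=i+j-m}\frac{ij}{\gamma_1\cdots\gamma_m}\,\widehat P_{ij}(\gamma_1,\dots,\gamma_m)\,\p_1\p_{\gamma_1}F_0\cdots\p_1\p_{\gamma_m}F_0,
\]
where $\widehat P$ and $\widetilde P$ denote the same partition-counting quantity (the notation in Proposition~\ref{prop: Fay by NZ} is a reindexing of the one in Section~\ref{section: enumuerative meaning}). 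So the only discrepancy between the two systems is the extra rational factor $ij/(\gamma_1\cdots\gamma_m)$ on the KP side. The key observation is that this factor is exactly what is absorbed by the rescaling $t_k\mapsto t_k/k$: if $\widehat F(\bt) := F_0(t_1, t_2/2, t_3/3,\dots)$, then $\p_k\widehat F = k^{-1}(\p_k F_0)|_{\text{rescaled}}$, so $\p_i\p_j\widehat F = (ij)^{-1}(\p_i\p_j F_0)$ and $\p_1\p_{\gamma_p}\widehat F = \gamma_p^{-1}(\p_1\p_{\gamma_p}F_0)$ (using $\p_1$ is unchanged). Substituting into the dispersionless KP equation, the factor $ij$ on the left cancels the $ij$ in the numerator on the right, and the factors $\gamma_1^{-1}\cdots\gamma_m^{-1}$ produced by the $m$ factors $\p_1\p_{\gamma_p}\widehat F$ cancel the denominator $\gamma_1\cdots\gamma_m$; what remains is precisely \eqref{eq: Ainfty Cauchy} written for $\widehat F$. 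I would write this cancellation out carefully, keeping track of which $\p_k$ get rescaled.

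For the first assertion it then remains to check that the \emph{range} of summation matches: on the KP side one sums over all $\gamma_1+\dots+\gamma_m=i+j-m$ with $\gamma_p\ge 1$, and on the $A$-side the quasi-homogeneity remark after the definition of $R^{\mathrm A}_\bullet$ shows $R^{\mathrm A}_{\alpha,\beta;\gamma_1,\dots,\gamma_m}$ vanishes unless $\gamma_1+\dots+\gamma_m=\alpha+\beta-m$, so the two sums have the same support; moreover $\widehat P_{ij}(\gamma_1,\dots,\gamma_m)=0$ automatically when $\gamma_1+\dots+\gamma_m\neq i+j-m$ (the total of the $i_k$'s plus the $j_k$'s is forced), so there is no mismatch. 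I would also note that \eqref{eq: Ainfty Cauchy}, by construction via Theorem~\ref{theorem: An stabilization} and Proposition~\ref{prop: compatibility from WDVV}, is a compatible system whose solution is determined by the Cauchy data $\p_1\p_\bullet f$, and likewise the dispersionless KP hierarchy in the Fay form of Proposition~\ref{prop: Fay by NZ} is such a Cauchy-type system with the same role for $\p_1\p_\bullet$; hence agreement of the coefficients (after rescaling) is genuinely the same hierarchy, not merely a formal coincidence of one equation.

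For the second assertion (full KP via $\p_k\mapsto\p_k^\hbar$), I would argue that Proposition~\ref{prop: Fay by NZ} states that Fay identities on $F=\hbar^2\log\tau$ are \emph{equivalent} to exactly the same system \eqref{eq: Ainfty Cauchy} with every $\p_k$ replaced by the operator $\p_k^\hbar$ defined through $\Delta(z)=\sum_k \frac{z^{-k}}{k}\p_k^\hbar$ (again after absorbing the $ij/(\gamma_1\cdots\gamma_m)$ factor by the rescaling $t_k\mapsto t_k/k$, which commutes with passing to the $\hbar$-deformed operators since it is a linear change of the times). Since $\p_k^\hbar=\p_k+O(\hbar)$, the $\hbar\to 0$ reduction recovers the first part, and this is the precise sense in which the full KP hierarchy ``is'' the $\hbar$-deformation of \eqref{eq: Ainfty Cauchy}.

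\textbf{Main obstacle.} The substantive point, and the one I would be most careful about, is matching the combinatorial coefficient $\widehat P_{ij}$ of Section~\ref{section: enumuerative meaning} with the quantity $\widetilde P_{ij}$ appearing in the quoted Lemma~3.2 of \cite{NZ}, together with the exact bookkeeping of the rescaling $t_k\mapsto t_k/k$: one must confirm that the $k^{-1}$ from each differentiated time on the right and the $(ij)^{-1}$ on the left conspire to turn the KP weights $ij/(\gamma_1\cdots\gamma_m)$ into the pure sign-and-factorial weights $(-1)^{m-1}/m$ of $R^{\mathrm A}_\bullet$, with no leftover factor. Everything else — compatibility, the Cauchy-data interpretation, the support of the sums — is either already established in the excerpt or immediate.
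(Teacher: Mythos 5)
Your proposal is correct and follows essentially the same route as the paper: the paper's proof is just a citation of Proposition~\ref{prop: Fay-type for An}, Corollary~\ref{corollary: An enumuerative} and the Natanzon--Zabrodin form of KP (Proposition~\ref{prop: Fay by NZ}), and your argument spells out exactly that chain, including the cancellation of the $ij/(\gamma_1\cdots\gamma_m)$ weights under $t_k\mapsto t_k/k$ and the observation $\p_1^\hbar=\p_1$ implicit in the second claim. The only difference is that you make explicit the bookkeeping the paper leaves as ``immediate,'' which is a fair filling-in rather than a different proof.
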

\begin{proof}
    It follows immediately from Proposition~\ref{prop: Fay-type for An} and Corollary~\ref{corollary: An enumuerative}.
    
    At the same time the full KP hierarchy is obtained from its dispersionless limit via the substitution $\p_k \mapsto \p_k^\hbar$, which completes the proof.  
\end{proof}

\subsection{BKP hierarchy}\label{section: 1-BKP}
This hierarchy was introduced in \cite{DJKM} via the Lax form (see also \cite{N1,N2} for another context). We present it here via the bilinear identity form on the function $\tau  = \tau(\bt)$, for $\bt = \lbrace t_1, t_3, t_5, \dots \rbrace$, following~\cite{T}.

Consider the operators
\[
    D^{\mathrm{B}}(z) := \sum_{n \ge 0} \frac{z^{-2n-1}}{2n+1} \p_{2n+1}, \quad \Delta^{\mathrm{B}}(z) := \frac{\exp(2\hbar D(z))-1}{\hbar}.
\]
Denote 
\[
    \tau(\bt \pm 2[z^{-1}]) := e^{\pm 2 \hbar D^{\mathrm{B}}(z)} \cdot \tau(\bt), \quad \xi^{\mathrm{B}}(\bt,z) := \sum_{n\ge 0} t_{2n+1}z^{2n+1}.
\]

The BKP hierarchy is the following equation
\begin{align*}
    \res\left( e^{\xi^{\mathrm{B}}(\bt'-\bt,z)} \tau(\bt'- 2[z^{-1}]) \tau(\bt+ 2[z^{-1}])\frac{dz}{z}\right) = \tau(\bt') \tau(\bt).
\end{align*}
The corresponding Fay identity for $F = \hbar^2\log\tau$ reads
\begin{align}
    &\left(z_1 + z_2 - \p_1 \hbar \Delta^{\mathrm{B}}(z_1)\Delta^{\mathrm{B}}(z_2) F  - \p_1(\Delta^{\mathrm{B}}(z_1)F + \Delta^{\mathrm{B}}(z_2)F)\right)
    \exp(\Delta^{\mathrm{B}}(z_1)\Delta^{\mathrm{B}}(z_2) F)
    \tag{BKP}\label{bkp}
     \\
     &\quad\quad\quad = \frac{z_1+z_2}{z_1-z_2} \left( z_1-z_2 - \p_1(\Delta^{\mathrm{B}}(z_1)F - \Delta^{\mathrm{B}}(z_2)F) \right).
     \notag
\end{align}
The dispersionless limit is
\begin{align}
    \left( 1 - \frac{\p_1 (2D^{\mathrm{B}}(z_1) + 2D^{\mathrm{B}}(z_2))F_0}{z_1+z_2} \right) e^{2D^{\mathrm{B}}(z_1)\cdot2D^{\mathrm{B}}(z_2) F_0}
     = 1 - \frac{\p_1 (2D^{\mathrm{B}}(z_1) - 2D^{\mathrm{B}}(z_2))F_0}{z_1-z_2}.
\tag{BKP-dl}\label{bkp-dl}
\end{align}

\subsection{Dispersionless hierarchy of type \texorpdfstring{$B$}{B}}
Consider an infinite set of rational numbers
\begin{equation*}
    R^{(B_N)}_{\alpha,\beta; \gamma_1,\dots,\gamma_k} := \left.\frac{\p^{m+2} F_{B_N}}{\p t_\alpha \p t_\beta \p t_{N+1-\gamma_1}\cdot\dots\cdot \p t_{N+1-\gamma_k}} \right|_{\bt=0}.
\end{equation*}
It follows by Proposition~\ref{prop: Bn stabilization} that the following quantities are well-defined 
\[
    R^{\mathrm{B}}_{\alpha,\beta; \gamma_1,\dots,\gamma_k} := R^{(B_{\alpha+\beta+1})}_{\alpha,\beta; \gamma_1,\dots,\gamma_k}.
\]
One notes immediately that $R^{\mathrm{B}}_{\alpha,\beta; \gamma_1,\dots,\gamma_k}$ is only non-zero when $\alpha+\beta-1= \gamma_1+\dots+\gamma_k$, which is essentially the quasi-homogeneity condition.

Consider the infinite system of PDEs on $f = f(t_1,t_2,\dots)$, Eq. \eqref{eq: int sys}:
\begin{equation}\label{eq: Binfty Cauchy}
    \p_\alpha \p_\beta f = \sum_{k \ge 1} \sum_{\gamma_1 + \dots + \gamma_k = \alpha+\beta -1 } R^{\mathrm{B}}_{\alpha,\beta; \gamma_1,\dots,\gamma_k} \p_1 \p_{\gamma_1} f \cdots \p_1 \p_{\gamma_k} f.
\end{equation}
This is a Cauchy-type system of PDEs expressing any second order derivatives of $\p_\alpha\p_\beta f$ via the special second order derivatives $\p_1\p_\bullet f$.

It follows from Proposition~\ref{prop: compatibility from WDVV} that the system of PDEs \eqref{eq: Binfty Cauchy} is compatible.

\begin{theorem}\label{theorem: Binfty hierarchy is BKP}
    The system of PDES \eqref{eq: Binfty Cauchy} coincides with the dispersionless BKP hierarchy \eqref{bkp-dl} after the change of variables $t_k \mapsto t_{2k-1}/(2k-1)$ and the substitution $f \mapsto 2F_0$.
\end{theorem}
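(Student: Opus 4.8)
The plan is to reduce the $B$-hierarchy to the $A$-hierarchy (which we already know is KP by Theorem~\ref{theorem: A-hierarchy is KP}) and then to reduce dispersionless KP to dispersionless BKP by the standard odd-times restriction. First I would make the coefficients explicit: by Corollary~\ref{corollary: An enumuerative} together with the defining identity $F_{B_N}(t_1,\dots,t_N) = F_{A_{2N-1}}(t_1,0,t_2,0,\dots,t_N)$ from Eq.~\eqref{eq: Bn via An}, one reads off that
\[
    \left.\frac{\p^{m+2} F_{B_N}}{\p t_\alpha \p t_\beta \p t_{N+1-\gamma_1}\cdots \p t_{N+1-\gamma_m}}\right|_{\bt=0}
    = (-1)^{m-1}(m-1)!\cdot \widehat P_{2\alpha-1,2\beta-1}(2\gamma_1-1,\dots,2\gamma_m-1),
\]
so that $R^{\mathrm B}_{\alpha,\beta;\gamma_1,\dots,\gamma_m}$ equals the $A$-coefficient $R^{\mathrm A}_{2\alpha-1,2\beta-1;2\gamma_1-1,\dots,2\gamma_m-1}$. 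In particular system~\eqref{eq: Binfty Cauchy} is obtained from system~\eqref{eq: Ainfty Cauchy} by keeping only the odd-indexed variables and evaluating all derivatives at indices of the form $2k-1$; equivalently, if $f(\bt)$ solves~\eqref{eq: Ainfty Cauchy} with all even times set to a constant, then its restriction to the odd times solves~\eqref{eq: Binfty Cauchy}.

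Next I would recall, using Theorem~\ref{theorem: A-hierarchy is KP}, that after $t_k\mapsto t_k/k$ system~\eqref{eq: Ainfty Cauchy} is the Fay form of dispersionless KP on $F_0$, i.e.
\[
    \exp\!\Big(\sum_{\alpha,\beta\ge 1}z_1^{-\alpha}z_2^{-\beta}\p_\alpha\p_\beta F_0\Big)
    = 1-\frac{\sum_{\alpha\ge1}(z_1^{-\alpha}-z_2^{-\alpha})\p_1\p_\alpha F_0}{z_1-z_2}.
\]
Restricting to the odd KP times (setting all even $t_{2j}$ to zero and writing $t_{2k-1}\mapsto t_{2k-1}/(2k-1)$) is precisely the well-known embedding of dispersionless BKP into dispersionless KP: with $F_0^{\mathrm B}:=\tfrac12 F_0\big|_{t_{\mathrm{even}}=0}$ one has $\p_1\p_{2k-1}F_0 = 2\,\p_1\p_{2k-1}F_0^{\mathrm B}$, and a short manipulation of the two Fay identities at the pair $(z_1,z_2)$ and $(z_1,-z_2)$ — multiplying the version with argument $-z_2$ into the version with argument $z_2$ and collecting — turns the dispersionless KP Fay identity into exactly~\eqref{bkp-dl} for $F_0^{\mathrm B}$. (This is the dispersionless shadow of the classical statement that a KP tau function in odd times only, satisfying $\tau(\bt)=\tau(-\bt)$-type symmetry up to the BKP normalization, is a BKP tau function; here it is cleaner because at genus zero the quadratic-in-$F$ term on the left of~\eqref{bkp} drops out.) Substituting $f\mapsto 2F_0$ and $t_k\mapsto t_{2k-1}/(2k-1)$ then matches~\eqref{eq: Binfty Cauchy} with~\eqref{bkp-dl}.

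The main obstacle I expect is the bookkeeping in the reduction step: one must check that the combinatorial factors $\tfrac{ij}{\gamma_1\cdots\gamma_m}\widehat P_{ij}(\vec\gamma)$ appearing in the dispersionless KP form of Proposition~\ref{prop: Fay by NZ} restrict correctly — i.e. that $\tfrac{(2\alpha-1)(2\beta-1)}{(2\gamma_1-1)\cdots(2\gamma_m-1)}\widehat P_{2\alpha-1,2\beta-1}(2\gamma_1-1,\dots,2\gamma_m-1)$, after the rescaling $t_k\mapsto t_{2k-1}/(2k-1)$ and the factor $2$ absorbed into $f$, reproduces the coefficients implicit in the generating-function identity~\eqref{bkp-dl}. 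The honest way to avoid this is to not expand in coefficients at all but to work entirely at the level of the generating-series (Fay) identities: derive the dispersionless BKP Fay identity from the dispersionless KP Fay identity by the $(z_1,z_2)\leftrightarrow(z_1,-z_2)$ trick, and separately observe (from Eq.~\eqref{eq: Bn via An} and Proposition~\ref{prop: Fay-type for An}) that $\p_\alpha\p_\beta F_{B_N}$ assembles into the left-hand side of~\eqref{bkp-dl}. Then the only thing left to verify is the elementary algebraic identity between the two Fay forms, which is a finite manipulation with rational functions of $z_1,z_2$ and no combinatorics.
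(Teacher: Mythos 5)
Your proposal is correct and follows essentially the same route as the paper: the paper likewise identifies the $B$-coefficients with odd-indexed $A$/KP data (via Corollary~\ref{corollary: Dn enumuerative}, which is the same content as your use of Eq.~\eqref{eq: Bn via An} plus Corollary~\ref{corollary: An enumuerative}), and proves the match by writing \eqref{bkp-dl} through the kernel identity $g^{BKP}(z_1,z_2)\cdot\phi=\bigl(g^{KP}(z_1,z_2)\big|_{z_i^{2k}=0}\bigr)\cdot 2\phi$ as a difference of the two KP-type logarithms at $(z_1,z_2)$ and $(z_1,-z_2)$, exactly your odd-times reduction with $f\mapsto 2F_0$. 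The only cosmetic difference is that the paper expands the logarithms in series and matches coefficients, whereas you propose to stay at the level of the generating (Fay) identities; both handle the $ij/(\gamma_1\cdots\gamma_m)$ bookkeeping through the same rescaling $t_k\mapsto t_{2k-1}/(2k-1)$.
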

\begin{proof}
    Note that $D^{\mathrm{B}}(z) = \left.D(z) \right|_{z^{2k} = 0}$. 
    Denote
    \[
        g^{BKP}(z_1,z_2)\cdot \phi := \frac{\p_1 (2D^{\mathrm{B}}(z_1) - 2D^{\mathrm{B}}(z_2)) \phi(\bt)}{z_1-z_2},
        \quad
        g^{KP}(z_1,z_2) \cdot \psi := \frac{\p_1 (D(z_1) - D(z_2))\psi(\bt)}{z_1-z_2}.
    \]
    We have 
    \[
    g^{BKP}(z_1,z_2)\cdot \phi = \left( \left.g^{KP}(z_1,z_2)\right|_{z_1^{2k} = z_2^{2k} = 0} \right) \cdot 2\phi.
    \]
    Eq.~\eqref{bkp-dl} reads
    \begin{align*}
        4D^{\mathrm{B}}(z_1)D^{\mathrm{B}}(z_2) F_0 &= \log\left(\frac{1 - g^{BKP}(z_1,z_2) \cdot F_0}{1 - g^{BKP}(z_1,-z_2) \cdot F_0} \right)
        \\
        &= \left.\left[\log\left(1 - g^{KP}(z_1,z_2) \cdot 2F_0 \right) - \log\left(1 - g^{KP}(z_1,-z_2) \cdot 2F_0 \right)\right]\right|_{z_1^{2k}=z_2^{2k} = 0}.
    \end{align*}
    Expanding RHS in series, one gets the following identity:
    \begin{align*}
        \sum_{\substack{i,j \ge 1 \\ i,j \not\in 2\ZZ}} &4 \p_i\p_j F_0 z_1^{-i} z_2^{-j}
        \\
        &= 
        2 \sum_{m \ge 1} \frac{(-1)^{m-1}}{m} \sum_{\substack{i,j \ge 1 \\ i,j \not\in 2\ZZ}} z_1^{-i} z_2^{-j} \sum_{\substack{\sum_{k=1}^m\gamma_k= i+j-m \\ \gamma_k \not\in 2\ZZ}} \frac{ij \cdot \widetilde P_{ij}(\gamma_1,\dots,\gamma_m)}{\gamma_1\cdots\gamma_m} \prod_{k=1}^m
        2\p_1\p_{\gamma_k} F_0.
    \end{align*}
    The proof follows now by Corollary~\ref{corollary: Dn enumuerative}.
\end{proof}

\section{\texorpdfstring{$D$}{D} hierarchy}\label{sec:Dhier}

Consider the series of rational numbers $R^{(D_N,1)}$ and $R^{(D_N,2)}$ defined as follows. 
For any fixed $\alpha,\beta \ge 1$ set
\begin{align*}
    & R^{(D_N)}_{\alpha,\beta; \sigma_1,\dots,\sigma_m} := \left.\frac{1}{m!} \frac{\p^{m+2} F_{D_N}}{\p t_\alpha \p t_\beta \p t_{\bar\sigma_1}\cdot\dots\cdot \p 
    t_{\bar\sigma_m}} \right|_{\bt =0}, \quad 1 \le \sigma_k \le N,
\end{align*}
where we use the notation $\bar N := N$ and $\bar \kappa = N - \kappa$ for $\alpha < N$.

Recall that dependence of $F_{D_N}$ on the variable $t_N$ is very special (see Section~\ref{section: DN revisited}). We have $\p_\alpha\p_\beta F_{D_N} = \p_\alpha \A_\beta^{(N)} + t_N^2 \p_\alpha \B_\beta$ and $\p_N F_{D_N} = t_N v_1^{(N)}$. The variable $t_N$ has non-zero weight, therefore it follows that for $\alpha,\beta < N$, and $\gamma_k < N$ we have
\begin{align*}
    & R^{(D_N)}_{\alpha,\beta; \gamma_1,\dots,\gamma_m} = \left.\frac{1}{m!} \frac{\p^{m+1} \A_\beta^{(N)}}{\p_\alpha \p t_{N-\gamma_1}\cdot\dots\cdot \p t_{N-\gamma_m}} \right|_{\bt=0},
    \\
    & R^{(D_N)}_{\alpha,N; \gamma_1,\dots,\gamma_m} = \left.\frac{1}{m!} \frac{\p^{m+1} v_1(\bt)}{\p t_\alpha \p t_{N-\gamma_1}\cdot\dots\cdot \p t_{N-\gamma_m}} \right|_{\bt=0}.
\end{align*}
By Theorem~\ref{theorem: Dn stabilization} the following quantities are well-defined
\begin{align*}
    R^{(\mathrm{D},1)}_{\alpha,\beta; \gamma_1,\dots,\gamma_m} &:= R^{(D_{\alpha+\beta-1})}_{\alpha,\beta; \gamma_1,\dots,\gamma_m},
    \\
    R^{(\mathrm{D},2)}_{\alpha; \gamma_1,\dots,\gamma_m} &:= R^{(D_{\alpha+\beta-1})}_{\alpha,N; \gamma_1,\dots,\gamma_m}.
\end{align*}
It follows immediately from the weights counting that $R^{(\mathrm{D},1)}_{\alpha,\beta; \gamma_1,\dots,\gamma_m}$ is only non-zero when $\gamma_1+\dots+\gamma_m = \alpha+\beta-1$ and $R^{(\mathrm{D},2)}_{\alpha; \gamma_1,\dots,\gamma_m}$ is only non-zero when $\gamma_1+\dots+\gamma_m = \alpha-1$.

For a function $f = f(t_0,t_1,t_2,\dots)$ consider the system of PDEs
\begin{align}
    \p_\alpha \p_\beta f &= \sum_{m \ge 1}\sum_{\gamma_1,\dots,\gamma_m} R^{(\mathrm{D},1)}_{\alpha,\beta; \gamma_1,\dots,\gamma_m} \prod_{k=1}^m \p_1 \p_{\gamma_a} f 
    \label{eq: DN flow A}
    \\
    \p_0 \p_\alpha f &= \p_0 \p_1 f \cdot \sum_{m\ge 1} \sum_{\gamma_1,\dots,\gamma_m} R^{(\mathrm{D},2)}_{\alpha; \gamma_1,\dots,\gamma_m} \prod_{k=1}^m \p_1 \p_{\gamma_a} f,
    \label{eq: DN flow B}
\end{align}
for all $\alpha,\beta \ge 2$.

Denote $p_k := \p_1\p_k f$. 
The first PDE's of the system above read
\begin{align*}
    \p_2 \p_2 f &= \frac{1}{12} p_1^3 -\frac{1}{2} p_2 p_1 + p_3,
    \\
    \p_2 \p_3 f &= -\frac{1}{2} p_2^2 + \frac{1}{4} p_1^2 p_2  - \frac{1}{2} p_1 p_3 + p_4 ,
    \\
    \p_2 \p_4 f &= \frac{1}{4} p_1 p_2^2 + \frac{1}{4} p_1^2 p_3  -\frac{1}{2} p_1 p_4  - p_2 p_3 + p_5 ,
    \\
    \p_3 \p_3 f &= \frac{1}{80} p_1^5 -\frac{1}{8} p_1^3 p_2  +\frac{3}{4} p_1 p_2^2  +\frac{1}{4} p_1^2 p_3  -\frac{1}{2}p_1 p_4 -\frac{3}{2}p_2 p_3 + p_5
\end{align*}
and
\begin{align*}
    \p_0 \p_2 f &= \frac{1}{2}p_1,
    \\
    \p_0 \p_3 f &= \frac{1}{2}p_1^2 + \frac{1}{2} p_2,
    \\
    \p_0 \p_4 f &= \frac{1}{8} p_1^3 + \frac{1}{2} p_1 p_2 + \frac{1}{2} p_3,
    \\
    \p_0 \p_5 f &= \frac{1}{16} p_1^4 + \frac{3}{8} p_1^2p_2 + \frac{1}{4} p_2^2 + \frac{1}{2}p_2p_3 + \frac{1}{2}p_4.
\end{align*}

Proposition~\ref{prop: compatibility from WDVV} implies that this system of PDEs is compatible.

\subsection{\texorpdfstring{2--component}{2-component} BKP}
Consider the function $\tau  = \tau(\bt,\bar\bt)$, for $\bt = \lbrace t_1, t_3, t_5, \dots \rbrace$ and $\bar\bt = \lbrace \bar t_1, \bar t_3, \bar t_5, \dots \rbrace$ being two sets of independent variables. 

Together with the operators $D^{\mathrm{B}}(z)$ and $\Delta^{\mathrm{B}}(z)$ from Section~\ref{section: 1-BKP} we will need two similar operators acting on the additional set of variables $\bar \bt$:
\[
    \bar D^{\mathrm{B}}(z) := \sum_{n \ge 0} \frac{z^{-2n-1}}{2n+1} \bar\p_{2n+1},
    \quad
    \bar \Delta^{\mathrm{B}}(z) := \frac{\exp(2 \hbar \bar D^{\mathrm{B}}(z)) - 1}{\hbar},
\]
where $\bar\p_k := \p / \p t_k$. Denote 
\[
    \tau(\bt \pm 2[z^{-1}], \bar\bt) := e^{\pm 2 \hbar D(z)} \cdot \tau(\bt,\bar\bt), \quad \tau(\bt, \bar\bt \pm 2[z^{-1}]) := e^{\pm 2 \hbar\bar D(z)} \cdot \tau(\bt,\bar\bt).
\]
Set also $\xi(\bt,z) := \sum_{n\ge 0} t_{2n+1}z^{2n+1}$.

The 2--component BKP hierarchy is the following equation
\begin{align*}
    &\res\left( e^{\xi(\bt'-\bt,z)} \tau(\bt' - 2[z^{-1}],\bar\bt') \tau(\bt + 2[z^{-1}],\bar\bt)\frac{dz}{z}\right)
    \\
    &\quad\quad =
    \res\left( e^{\xi(\bar\bt'-\bar\bt,z)} \tau(\bt',\bar\bt'- 2[z^{-1}]) \tau(\bt ,\bar\bt+ 2[z^{-1}])\frac{dz}{z}\right).
\end{align*}
This equation coincides with the BKP hierarchy by putting $\bt' = \bt$.

The set of Fay identities equivalent to the equation above were derived in \cite[Eqq.4.9 - 4.12]{T}. They read
\begin{align}
    &\left(z_1 + z_2 - \hbar \p_1 \Delta^{\mathrm{B}}(z_1)\Delta^{\mathrm{B}}(z_2) F  - \p_1(\Delta^{\mathrm{B}}(z_1)F + \Delta^{\mathrm{B}}(z_2)F)\right)
    \exp(\Delta^{\mathrm{B}}(z_1)\Delta^{\mathrm{B}}(z_2) F)
    \tag{2BKP-1}\label{2bkp-1}
     \\
     &\quad\quad\quad = \frac{z_1+z_2}{z_1-z_2} \left( z_1-z_2 - \p_1(\Delta^{\mathrm{B}}(z_1)F - \Delta^{\mathrm{B}}(z_2)F) \right).
     \notag
     \\
    &\left(z_1 + z_2 - \hbar \bar\p_1 \bar\Delta^{\mathrm{B}}(z_1)\bar\Delta^{\mathrm{B}}(z_2) F  - \bar\p_1(\Delta^{\mathrm{B}}(z_1)F + \bar\Delta^{\mathrm{B}}(z_2)F)\right)
    \exp(\bar\Delta^{\mathrm{B}}(z_1)\bar\Delta^{\mathrm{B}}(z_2) F)
    \tag{2BKP-2}\label{2bkp-2}
     \\
     &\quad\quad\quad = \frac{z_1+z_2}{z_1-z_2} \left( z_1-z_2 - \bar\p_1(\bar\Delta^{\mathrm{B}}(z_1)F - \bar\Delta^{\mathrm{B}}(z_2)F) \right).
     \notag
     \\
     &\left(z_1  - \hbar \p_1 \Delta^{\mathrm{B}}(z_1)\bar\Delta^{\mathrm{B}}(z_2) F  - \p_1(\Delta^{\mathrm{B}}(z_1)F + \bar\Delta^{\mathrm{B}}(z_2)F)\right)
    \exp(\Delta^{\mathrm{B}}(z_1)\Delta^{\mathrm{B}}(z_2) F)
     \tag{2BKP-3}\label{2bkp-3}
     \\
     &\quad\quad\quad = z_1 - \p_1(\Delta(z_1)-\bar\Delta(z_2))F,
     \notag
     \\
    &\left(z_2 - \hbar \bar \p_1 \Delta^{\mathrm{B}}(z_1)\bar\Delta^{\mathrm{B}}(z_2) F  - \bar \p_1(\Delta^{\mathrm{B}}(z_1)F + \bar\Delta^{\mathrm{B}}(z_2)F)\right)
    \exp(\Delta^{\mathrm{B}}(z_1)\Delta^{\mathrm{B}}(z_2) F)
     \tag{2BKP-4}\label{2bkp-4}
     \\
     &\quad\quad\quad = z_2 - \bar \p_1(\bar\Delta(z_2)-\Delta(z_1))F.
     \notag
\end{align}
One notes immediately that \eqref{2bkp-1} coincides with \eqref{2bkp-2} after the interchange $\bt$ and $\bt'$.

In the case of 2-component BKP hierarchy we do not have any result similar to Proposition~\ref{prop: Fay by NZ} above. However it is not hard to derive the dispersionless form of the Fay-type identities above. 

Assume $\tau(\bt,\bar\bt) = \hbar^{2}F$ with $F = \sum_{g\ge 0}\hbar^gF_g(\bt,\bar\bt)$. Dispersionless limit of the Fay-type identities above reads:
\begin{align}
    \left(1 - \frac{ \p_1 (2D^{\mathrm{B}}(z_1) + 2D^{\mathrm{B}}(z_2))F_0}{z_1 + z_2} \right) &
    \ e^{2D^{\mathrm{B}}(z_1)\cdot2D^{\mathrm{B}}(z_2) F_0}
    \tag{2BKP-1dl}\label{2bkp-1dl}
     \\
     & 
     = \left(1 - \frac{\p_1 (2D^{\mathrm{B}}(z_1) - 2D^{\mathrm{B}}(z_2))F_0}{z_1-z_2} \right),
     \notag
     \\
     \left(1 - \frac{\bar\p_1 (2\bar D^{\mathrm{B}}(z_1) + 2\bar D^{\mathrm{B}}(z_2))F_0}{z_1+z_2} \right) &
     \ e^{2\bar D^{\mathrm{B}}(z_1)\cdot 2\bar D^{\mathrm{B}}(z_2) F_0}
    \tag{2BKP-2dl}\label{2bkp-2dl}
     \\
     &= \left(1 - \frac{\bar\p_1 (2\bar D^{\mathrm{B}}(z_1) - 2\bar D^{\mathrm{B}}(z_2))F_0}{z_1-z_2} \right),
     \notag
     \\
     \left(z_1 - \p_1 (2D^{\mathrm{B}}(z_1) +2\bar D^{\mathrm{B}}(z_2))F_0 \right) &
    \ e^{2D^{\mathrm{B}}(z_1)\cdot 2\bar D^{\mathrm{B}}(z_2)F_0}
     \tag{2BKP-3dl}\label{2bkp-3dl}
     \\
     &= z_1 - \p_1(2D^{\mathrm{B}}(z_1)- 2\bar D^{\mathrm{B}}(z_2))F_0,
     \notag
     \\
     \left(z_2 - \bar\p_1 (2D^{\mathrm{B}}(z_1) +2\bar D^{\mathrm{B}}(z_2))F_0 \right) & 
     \ e^{2D^{\mathrm{B}}(z_1)\cdot 2\bar D^{\mathrm{B}}(z_2)F_0}
     \tag{2BKP-4dl}\label{2bkp-4dl}
     \\
     &= z_2 - \bar\p_1(2\bar D^{\mathrm{B}}(z_2)- 2 D^{\mathrm{B}}(z_1))F_0.
     \notag
\end{align}

\subsection{Identification}
The connection between our Cauchy-type hierarchy of $D_\infty$ Frobenius manifold and 2-component BKP hierarchy requires taking a certain reduction in $\bar\bt$ variable. 

In particular, we assume the Fay-type identities above to be reduced by setting
\[
    \bar D^{\mathrm{B}}(z) = z^{-1}\bar\p_1,
\]
and assuming all operators $\bar D^{\mathrm{B}}(z)$ only in the $1$-st order in $z^{-1}$.

That is, the solution $F_0$ of the reduced hierarchy should satisfy Eq.~\eqref{2bkp-1dl} and also
\begin{align}\label{red2bkp-3dl}
     D^{\mathrm{B}}(z_1)\cdot \bar\p_1 2F_0
     = \sum_{m \ge 1} \bar\p_1\p_1 2 F_0 \cdot z_1^{-m}
     \left(\p_1 D^{\mathrm{B}}(z_1) 2F_0 \right)^{m-1}.
\end{align}
while Eq.~\eqref{2bkp-2dl} and Eq.~\eqref{2bkp-4dl} hold trivially in this reduction.

\begin{theorem}
    The system of PDEs \eqref{eq: DN flow A}--\eqref{eq: DN flow B} 
    coincides with the discussed above reduction of the 2--component BKP hierarchy after the change of variables $t_k \mapsto t_{2k-1}/(2k-1)$, $t_0 \mapsto \bar t_1$ and substitution $f \mapsto 2F_0$.
\end{theorem}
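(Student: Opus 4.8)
First I would note that under the reduction $\bar D^{\mathrm{B}}(z)=z^{-1}\bar\p_1$ the identities \eqref{2bkp-2dl} and \eqref{2bkp-4dl} become tautologies, so the reduced $2$--component BKP hierarchy consists precisely of \eqref{2bkp-1dl} together with \eqref{red2bkp-3dl}; correspondingly the $D$--hierarchy splits into the two families \eqref{eq: DN flow A} and \eqref{eq: DN flow B}. The plan is then to identify \eqref{eq: DN flow A} with \eqref{2bkp-1dl} and \eqref{eq: DN flow B} with \eqref{red2bkp-3dl} after the prescribed change of variables $t_k\mapsto t_{2k-1}/(2k-1)$, $t_0\mapsto\bar t_1$ and the substitution $f\mapsto 2F_0$. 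Since each of the two hierarchies is of Cauchy type — all second derivatives are produced from the data $\{\p_1\p_k f\}_{k\ge1}$ together with $\p_0\p_1 f$ — it suffices to match the defining relations coefficient by coefficient.

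For the first family I would use that \eqref{2bkp-1dl} is literally the one--component dispersionless BKP equation \eqref{bkp-dl} in the variables $\bt$. By Theorem~\ref{theorem: Dn stabilization} the numbers $R^{(\mathrm{D},1)}$ are well defined, and by \eqref{eq: D_N potential explicit in t} together with Corollary~\ref{corollary: Dn enumuerative} they equal $\frac{(-1)^{m-1}}{m}\widehat P_{2\alpha-1,2\beta-1}(2\gamma_1-1,\dots,2\gamma_m-1)$; in particular, via Proposition~\ref{proposition: Bn inside Dn} (or the same corollary applied to $F_{B_N}$), they coincide with the coefficients $R^{\mathrm{B}}$. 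Hence \eqref{eq: DN flow A} is the system \eqref{eq: Binfty Cauchy} verbatim, and its identification with \eqref{bkp-dl}=\eqref{2bkp-1dl} under the same change of variables is exactly the content of Theorem~\ref{theorem: Binfty hierarchy is BKP}, which I would invoke.

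For the second family the input must first be produced from the $D_N$ singularity. Since $\p_N F_{D_N}=v_1^{(N)}(\bt)\,t_N$ with $v_1^{(N)}$ independent of $t_N$ (Section~\ref{section: DN revisited}), one has $R^{(\mathrm{D},2)}_{\alpha;\gamma_1,\dots,\gamma_m}=\frac1{m!}\,\p_\alpha\p_{N-\gamma_1}\cdots\p_{N-\gamma_m}v_1^{(N)}|_{\bt=0}$, and feeding in the explicit expansion \eqref{eq: Dn essential coordinate via flat} of $v_1^{(N)}$ (equivalently the formula for $\p v_1/\p t_\beta$ in the proof of Theorem~\ref{theorem: Dn stabilization}) and reading off the unique monomial surviving at $\bt=0$ gives a simple closed form for $R^{(\mathrm{D},2)}$, supported on $\gamma_1+\dots+\gamma_m=\alpha-1$. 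Next I would record how \eqref{red2bkp-3dl} is obtained from the mixed Fay identity \eqref{2bkp-3dl} by setting $\bar D^{\mathrm{B}}(z_2)=z_2^{-1}\bar\p_1$ and retaining only the first order in $z_2^{-1}$, and then extract the coefficient of $z_1^{-(2\alpha-1)}$ on both sides of \eqref{red2bkp-3dl}: under the change of variables $D^{\mathrm{B}}(z_1)$ turns into a series in $z_1^{-1}$ whose coefficients are the $\p_k$ up to explicit rational constants, the geometric series on the right reproduces the sum over partitions $\gamma_1+\dots+\gamma_m=\alpha-1$, and matching the prefactors gives exactly \eqref{eq: DN flow B}.

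The delicate step is the last one: unlike the $A$ and $B$ cases there is no ready--made Fay--identity form to cite for the mixed flow, so one has to (i) compute $R^{(\mathrm{D},2)}$ honestly from the flat--to--unfolding change of coordinates of the $D_N$ singularity and (ii) push the $z_1$--expansion of \eqref{red2bkp-3dl} through the rescaling $t_k\mapsto t_{2k-1}/(2k-1)$ carefully enough that the $(2k-1)$--factors, together with the factor $2$ coming from $f=2F_0$, cancel to leave the coefficients of \eqref{eq: DN flow B} — this bookkeeping, rather than any conceptual difficulty, is where the care is required. Once both families are matched, the two hierarchies coincide because each is freely and uniquely solvable from the same Cauchy data.
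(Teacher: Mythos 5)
Your proposal is correct and follows essentially the same route as the paper: identify \eqref{eq: DN flow A} with \eqref{2bkp-1dl} via Corollary~\ref{corollary: Dn enumuerative} and Theorem~\ref{theorem: Binfty hierarchy is BKP}, then match \eqref{eq: DN flow B} with \eqref{red2bkp-3dl} by expanding in $z_1^{-1}$ after the rescaling and recognizing the resulting coefficients as the expansion of $\p v_1/\p t_a$ from Eq.~\eqref{eq: Dn essential coordinate via flat}. The only difference is presentational: you spell out the intermediate identification $R^{(\mathrm{D},1)}=R^{\mathrm{B}}$ via Proposition~\ref{proposition: Bn inside Dn}, which the paper leaves implicit.
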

\begin{proof}
    Note that Eq.~\eqref{2bkp-1dl} coincides with Eq.~\eqref{bkp-dl}. It follows immediately from Corollary~\ref{corollary: Dn enumuerative} and Theorem~\ref{theorem: Binfty hierarchy is BKP} that the system~\eqref{eq: DN flow A} coincides after the discussed substitution with Eq.~\eqref{2bkp-1dl}.
    
    We show that Eq.~\eqref{eq: DN flow B} coincides with Eq.~\eqref{red2bkp-3dl} after the discussed substitution.
    This is essentially a question about the rational numbers $R^{(\mathrm{D},2)}_{\alpha; \gamma_1,\dots,\gamma_k}$.
    
    Applying the discussed substitution to Eq.\eqref{red2bkp-3dl} we have
    \begin{align*}
     &\sum_{a\ge 1} z_1^{-(2a-1)} \p_a\bar\p_1 f
     = \sum_{m \ge 1} \p_1\bar\p_1 f \cdot z_1^{-m}
     \left(\sum_{b\ge 1} z_1^{-(2b-1)} \p_1\p_b f \right)^{m-1}
     \\
     \Leftrightarrow
     &\quad \p_a\bar\p_1 f = \p_1\bar\p_1 f \sum_{m \ge 0} \sum_{\gamma_\bullet} \p_1\p_{\gamma_1}f \cdots \p_1\p_{\gamma_m}f,
    \end{align*}
    where the last summation is taken over all $2a-1 = m +1 + \sum_{k=1}^m (2\gamma_k-1)$ that hold if and only if $a = 1 + \sum_{k=1}^m \gamma_k$. This coincides exactly with the expansion of $\p v_1 / \p t_a$ in the $s_{\gamma_k} = t_{N-\gamma_k}$ variables, c.f. Eq.\eqref{eq: Dn essential coordinate via flat}. This completes the proof.
\end{proof}

\end{document}